\newlength{\abstractwidth}
\newtheorem{proposition}{Proposition}%[section]
\newtheorem{conj}{Conjecture}
\newtheorem{prob}{Problem}
\numberwithin{equation}{section}
\def\bea{\begin{eqnarray}}
\def\eea{\end{eqnarray}}
\def\be{\begin{equation}}
\def\ee{\end{equation}}
\def\ba{\begin{align}}
\def\ea{\end{align}}
\def\bse{\begin{subequations}}
\def\ese{\end{subequations}}
\newcommand{\nn}{\nonumber}
\newcommand{\Li}{{\rm Li}}
\def\Im{\,{\rm Im}\,}
\def\Re{\,{\rm Re}\,}
\def\({\left(}
\def\){\right)}
\def\[{\left[}
\def\]{\right]}
\def\<{\left\langle}
\def\>{\right\rangle}
\def\hf{{1\over 2}}
\newcommand{\p}{\partial}
\newcommand{\de}{\mathrm{d}}
\newcommand{\I}{\mathrm{i}}
\def\vph{\varphi}
\newcommand{\eps}{\epsilon}
\newcommand{\veps}{\varepsilon}
\newcommand{\vth}{\vartheta}
\newcommand{\cL}{\mathcal{L}}
\newcommand{\cF}{\mathcal{F}}
\newcommand{\cS}{\mathcal{S}}
\newcommand{\cM}{\mathcal{M}}
\newcommand{\cN}{\mathcal{N}}
\newcommand{\cX}{\mathcal{X}}
\newcommand{\cT}{\mathcal{T}}
\newcommand{\cJ}{\mathcal{J}}
\newcommand{\cZ}{\mathcal{Z}}
\newcommand{\cI}{\mathcal{I}}
\newcommand{\cO}{\mathcal{O}}
\newcommand{\cY}{\mathcal{Y}}
\newcommand{\IR}{\mathds{R}}
\newcommand{\IC}{\mathds{C}}
\newcommand{\IZ}{\mathds{Z}}
\newcommand{\IQ}{\mathds{Q}}
\newcommand{\IP}{\mathds{P}}
\newcommand{\sgn}{\mbox{sgn}}
\newcommand{\tn}{\tilde n}
\def\tcJ{\tilde\cJ}
\def\cl0{\tilde c_0}
\newcommand{\thb}{\vth_\beta}
\newcommand{\vbt}{v_\beta}
\def\bOm{\bar\Omega}
\def\ba{\bar a}
\def\hv{\hat v}
\def\hw{\hat w}
\def\hvb{\hat v_\beta}
\def\hsig{\hat\sigma}
\def\cXsf{\cX^{\rm sf}}
\def\Br{^{\rm Br}}
\def\fl#1{\lfloor #1\rfloor}
\def\Pv{\mbox{P.v.}}
\def\cYa{\cY^{(1)}_\beta}
\def\cYb{\cY^{(2)}_\beta}
\def\cYc{\cY^{(2)}_0}
\def\cYai#1{\cY^{(1,#1)}_\beta}
\def\cYbi#1{\cY^{(2,#1)}_\beta}
\def\cYci#1{\cY^{(2,#1)}_0}
\def\under#1#2{\mathop{#1}\limits_{#2}}
\begin{document}

\thispagestyle{empty}

\begin{flushright}
arXiv:2106.12006v3
\end{flushright}
\vskip 0.3in

\begin{center}
{\Large \bf Conformal TBA for resolved conifolds}
\vskip 0.2in

{\large Sergei Alexandrov$^\dagger$ and Boris Pioline$^\ddagger$}

\vskip 0.3in

$^\dagger$ {\it
Laboratoire Charles Coulomb (L2C), Universit\'e de Montpellier,
CNRS, \\ F-34095, Montpellier, France}

\vskip 0.15in

$^\ddagger$ {\it Laboratoire de Physique Th\'eorique et Hautes
Energies (LPTHE), UMR 7589 CNRS-Sorbonne Universit\'e,
Campus Pierre et Marie Curie,
4 place Jussieu, \\ F-75005 Paris, France} \\

\vskip 0.15in

{\tt \small sergey.alexandrov@umontpellier.fr,  pioline@lpthe.jussieu.fr}

\vskip 0.4in

\begin{abstract}
\vskip 0.1in

We revisit the Riemann-Hilbert problem determined by Donaldson-Thomas invariants for
the resolved conifold and for other small crepant resolutions.
While this problem can be recast  as a system of TBA-type equations in
the conformal limit, solutions are ill-defined due to divergences in the sum over
infinite trajectories in the spectrum of D2-D0-brane bound states.
We explore various prescriptions to make the sum well-defined, show that one of them reproduces the existing solution
in the literature, and identify an alternative solution which is better behaved in a certain limit.
Furthermore, we show that a suitable asymptotic expansion of the $\tau$ function
reproduces the genus expansion of the topological string partition function
for any small crepant resolution.
As a by-product, we conjecture new
integral representations for the triple sine function, similar to Woronowicz' integral representation
for Faddeev's quantum dilogarithm.
\end{abstract}
\end{center}

\newpage
\setcounter{tocdepth}{2}
\tableofcontents

\baselineskip=15pt
\setcounter{page}{1}
\setcounter{equation}{0}
\setcounter{footnote}{0}
\setlength{\parskip}{0.2cm}

\section{Introduction}

The possibility to construct a smooth hyperk\"ahler or quaternion-K\"ahler space $\cM$ from
a suitable collection of integers $\{\Omega(\gamma,z)\}$ has been appreciated for a long time
in the physics literature on quantum field theories or string vacua with $\cN=2$ supersymmetry in four dimensions:
in this context, the integers $\Omega(\gamma,z)$ count supersymmetric bound states with fixed charge
$\gamma$ and asymptotic value $z$ of the moduli in four dimensions, and the quaternionic
space $\cM$ appears as the moduli space of massless fields after reduction on a circle
\cite{Gaiotto:2008cd,Alexandrov:2008gh,Alexandrov:2011va}.
Recently, Tom Bridgeland
has proposed a variant of this idea in the mathematically rigorous framework of triangulated categories
of Calabi-Yau-3 type:  the integers $\Omega(\gamma,z)$ are the generalized Donaldson-Thomas (DT) invariants
for the stability condition coordinatized by $z$ in the stability space $\cS$, while
$\cM$ is the total space of the holomorphic tangent bundle $T\cS$, endowed with
a complex hyperk\"ahler metric. The latter is  determined by solutions to a suitable Riemann-Hilbert (RH) problem
\cite{Bridgeland:2019fbi,Bridgeland:2020zjh}, which arises formally as the conformal limit of the
RH problem considered by physicists \cite{Gaiotto:2014bza}.
The smoothness of the metric across walls of marginal stability
in $\cS$ is ensured by the usual wall-crossing formula for DT invariants \cite{ks}.

In a recent work \cite{Alexandrov:2021wxu}, using twistorial techniques developed in the physics literature
\cite{Alexandrov:2008ds,Alexandrov:2008nk,Alexandrov:2011va},
we reformulated Bridgeland's RH problem as a system of TBA-type integral equations
for Darboux coordinates $\cX_\gamma(z,\theta,t)$ on $\cM$, where $\theta$ parametrizes
the fiber of the tangent bundle and $t\in \IC^\times$ is the twistorial parameter. Using this reformulation,
we obtained general formulae  for the Plebanski potential
$W(z,\theta)$ and the Joyce potential
$F(z,\theta)$ in terms of solutions to that system. These functions, which are homogeneous
under the $\IC^\times$ action on $\cS$, are required
to satisfy `heavenly' and `isomonodromy' equations, respectively,
and determine a complex hyperk\"ahler cone metric on $\cM$. For general coupled BPS structures where the
antisymmetric pairing $\langle \gamma,\gamma'\rangle$ is non-vanishing for some pair of
active charge vectors (i.e. $\Omega(\gamma) \Omega(\gamma') \neq 0$), the TBA equations
are highly non-trivial, but they
can be solved iteratively providing formal power series representations for $\cX_\gamma$,
$W$ and $F$.
In the uncoupled, finite case (where only a finite number of DT invariants $\Omega(\gamma)$
are non-zero, and moreover $\langle \gamma,\gamma'\rangle=0$ for any pair of them),
the TBA equations trivialize and lead to elementary formulae for $W$ and $F$, and somewhat less
elementary but explicit formulae for the Darboux coordinates $\cX_\gamma$ as linear combinations
of the generalized Gamma function $\Lambda(z,\eta)$.
In addition, the latter can all be obtained as suitable $z$-derivatives of a function $\tau(z,\theta,t)$,
which is given by a linear combination of generalized Barnes functions $\Upsilon(z,\eta)$
(see Appendix \ref{ap-functions} for the definition of these special functions).

In this work, we address an intermediate case, namely that of uncoupled but infinite BPS structures.
This case is relevant for the derived category of coherent sheaves on a non-compact Calabi-Yau threefold $X$
with no compact divisors, such as the conifold or more generally, small crepant resolutions of CY3 singularities.
For such cases, the pairing $\langle \gamma,\gamma'\rangle$ vanishes for any pair of active charge vectors,
and moreover $\Omega(\gamma)$ is non-zero (and independent of $k$) on a finite set of `trajectories' of the form
$\gamma=\beta+k\delta$ where $\delta$ is the charge vector of a point (corresponding to the skyscraper sheaf on $X$),
$k\in\IZ$ and $\beta$ runs over the set $B=\{\pm \beta_1,\dots,\pm \beta_n\}$ where $\beta_\ell$
are the homology classes of rational curves in $X$.
The integers $n_\beta= \Omega(\beta+k\delta)$ coincide with the genus-zero Gopakumar-Vafa (GV) invariants,
and are known explicitly for all small crepant resolutions of toric CY3 singularities
(see e.g. \cite[\S 5]{Mozgovoy:2020has} and references therein).
Moreover, $\Omega(k\delta)$ is also independent of $k$ and equal to minus the Euler number $\chi_X$ of $X$.
In physical terms, $\Omega(\beta+k\delta)$ counts BPS bound states of a D2-brane
wrapped on a rational curve with $k$ units of D0-brane charge.
Importantly, in such uncoupled cases the Darboux coordinates can still
be derived from a $\tau$ function, even though the BPS spectrum is infinite.

The special case of the resolved conifold $X=\cO(-1,-1)\rightarrow \IP^1$ was considered in
\cite{Bridgeland:2017vbr,Bridgeland:2019fbi}, where solutions to the Riemann-Hilbert problem
were found, for vanishing and generic values of the fiber coordinates $\theta$, respectively.
Using these explicit solutions, the Plebanski potential (called there `Joyce function')
was obtained  in \cite{Bridgeland:2019fbi}, while an expression for the $\tau$ function was derived
in \cite{Bridgeland:2017vbr}  (only for $\theta=0$).
Strikingly, the asymptotic expansion of $\tau(z,0,t)$ at small $t$ turns out to match
the genus expansion of the topological string partition function on $X$, opening the
possibility that the $\tau$ function may provide a natural non-perturbative definition of
the topological string partition function (see also
\cite{Koshkin:2007mz,Pasquetti:2009jg,Grassi:2014zfa,
Hatsuda:2015owa,Krefl:2015vna,Bonelli:2016qwg,
Coman:2018uwk,Grassi:2019coc,Alim:2021lld}
for other viewpoints on this problem).

\subsubsection*{Summary of the main results}

In this work, we revisit Bridgeland's results
for the conifold by using the TBA formulation of the RH problem,
and extend them to arbitrary small crepant resolutions. As for any uncoupled BPS structure,
the TBA equations trivialize and produce explicit expressions for the Darboux coordinates and
the potentials.
However, in contrast to the finite case, these expressions are no longer well-defined
because the sums over charges $\gamma$ are {\it not} absolutely convergent.
Thus, the problem reduces to finding a prescription for these divergent sums which
leads to sensible results
satisfying the properties required from solutions of the RH problem.

First, we attempt to define the Darboux coordinates and the $\tau$ function by summing the
finite case result  over the infinite set of non-vanishing DT invariants $\Omega(\beta+k\delta)$.
For the conifold and for vanishing fiber coordinates $\theta$, this approach works and
reproduces the solution found in \cite{Bridgeland:2017vbr},
upon using a Binet-type integral representation for the special functions $\Lambda(z,\eta)$ and $\Upsilon(z,\eta)$.
However, it breaks down when the fiber coordinates are non-zero, or for general small crepant resolutions not satisfying the
property $\sum_{\beta\in B} n_\beta=\chi_X$, which holds
coincidentally for the  conifold. Indeed, in those cases the prescription leads to integrals which diverge at the origin.
Most of these integrals can be made convergent by shifting by hand the contour so as
to avoid the origin.
Following this {\it ad hoc} procedure and dropping remaining divergent contributions, it is possible to arrive at
the solution for the conifold with generic fiber coordinates given in \cite{Bridgeland:2019fbi},
and generalize it to other small crepant resolutions.
However, such procedure is hard to justify and cannot be accepted as a consistent prescription.
Furthermore, it is impossible to make sense of the Plebanski potential $W$ in this way.

Next, we explore an alternative approach. Rather than taking the results for finite BPS structures as a starting point,
we take a few steps back. First, we do {\it not} combine the contributions of opposite charges $\gamma$ and $-\gamma$,
which was an important step in obtaining the finite case solution, and second, we express all relevant quantities
in terms of rational DT invariants
\be
\bOm(\gamma)=\sum_{d|\gamma} \frac{1}{d^2}\, \Omega(\gamma/d).
\label{defbOm}
\ee
In the case of interest, they are supported on charges $\gamma=n(\beta+k\delta)$ for any $n\geq 1$.
Then, we first make sense of the sum over $k$ for each trajectory labelled by $\beta$ (including the one with $\beta=0$)
and only at the last step perform the sum over $n$. It turns out that the first step boils down to
making sense of sums of the form
\be
S(v,w,\vth,\vph)=\sum_{k\in\IZ}^\infty \frac{ e^{2\pi\I (\vth+k \vph)}}{v+k w}.
\label{seriesk}
\ee
This sum is not absolutely convergent, even when $\vph=0$. However, in this case
it becomes absolutely convergent after combining contributions of positive and negative $k$
and leads to a solution that coincides with the one in \cite{Bridgeland:2017vbr} in the conifold case.
For non-vanishing $\vph$, we shall argue that the sum \eqref{seriesk} can be recast into the following integral
\be
S(v,w,\vth,\vph)=\frac{e^{2\pi\I (\vth- \vph v/w)}}{w}\int_0^\infty \frac{\de y}{y} \, \frac{y^{v/w}}{1-y}\, ,
\label{series4}
\ee
where the divergence at $y=1$ is avoided either by moving the contour above or
below the pole, or by using the principal value prescription.
This gives three possible results $S^{(\eps)}$ labelled by $\eps=\pm$ and 0, respectively.
All of them lead to well-defined Darboux coordinates, $\tau$ function and Plebanski potential.

For instance, let us denote $Z_\beta=\vbt$, $Z_\delta=w$ where $Z_\gamma$ is the central charge function,
while $\thb$ and $\vph$ be the corresponding fiber coordinates (see \$\ref{subsec-CYs} for precise definitions).
Then, the principle value prescription results in the following $\tau$ function
\bea
\log\tau&=&
\sum_{\beta\in B}n_\beta\[\frac{t}{4\pi w}\(\cJ_{2,0}\(\frac{w}{t},\frac{\vbt}{t},\vph,\thb\)
+2\pi \cJ_{1,1}\(\frac{w}{t},\frac{\vbt}{t},\vph,\thb\)\)-\frac{1}{4\pi^2}\, \cI_2\(\frac{\vbt}{w},\vph,\thb\) \]
\nn\\
&&
-\frac{\chi_X t}{4\pi w} \[\tcJ_{2,0}\(\frac{w}{t},\vph\)
+2\pi \tcJ_{1,1}\(\frac{w}{t},\vph\)\]-\frac{\chi_X}{24}\, \log \frac{w}{t}\, ,
\label{reztau-pv}
\eea
where the special functions $\cJ_{m,n}$, $\tcJ_{m,n}$ and $\cI_n$ are introduced
in \eqref{deffunJ}, \eqref{deffunJ0} and \eqref{def-cIn}, respectively.
The Plebanski potential is in turn given
(up to irrelevant terms linear in the fiber variables) in terms of Bernoulli polynomials,
\be
W=\frac{\pi^2}{6\I w} \sum_{\beta\in B}n_\beta\, \cot(\pi \vbt/w) \, B_3\(\[\thb-\vph\, \frac{\vbt}{w}\]\),
\label{resW-pv}
\ee
where the bracket notation $[x]$ is defined in \eqref{defbr}.
The solutions for different prescriptions differ from the functions given above by simple contributions
expressed in terms of polylogarithms.
For example, the difference between the Plebanski potential corresponding to the prescription leading
to  Bridgeland's solution in the case of the conifold \cite{Bridgeland:2019fbi}
(see \eqref{WBr}) and the one given in \eqref{resW-pv},
is equal to
\be
W\Br-W=\frac{ \pi^2 }{3 w}\sum_{\beta\in B^+}n_\beta \,B_3\(\[\thb-\vph\, \frac{\vbt}{w}\]\),
\ee
where two three-logarithms have been combined using the identity \eqref{Li-ident}.

Thus, all ambiguities in the definition of
the RH problem appear to be captured by the choice of prescription for the integral \eqref{series4}.
In principle, this choice can be done separately for each $\beta\in H_2\cup\{0\}$.
In practice, a random assignment of $\eps_\beta$ is of course unreasonable.
However, it is important to take this possibility into account because, as we will argue,
the choice $\eps_\beta=\sgn\Im (\vbt/w)$ (i.e. corresponding to distinct prescriptions for three classes of trajectories
$\{\beta=0,B^+,B^-\}$ where $B^\pm$ are defined in \eqref{defH2p})
precisely leads to Bridgeland's solution for the conifold in \cite{Bridgeland:2019fbi}.
Note that for this solution, the Darboux coordinates $\cX_\gamma$ are {\it not} meromorphic in $t$
(contrary to the claim in \cite{Bridgeland:2019fbi}),
but have an essential singularity at $t=\vph/w$. This in fact holds for any choice of $\epsilon_\beta$.
However, rather than choosing this $\beta$-dependent prescription, we argue that a preferable choice is to take
the principle value prescription for the integral \eqref{series4}. Indeed, this appears to be the only
prescription which leads to a solution which in the limit $\thb=\vph=0$ agrees
with the results in \cite{Bridgeland:2017vbr}, which are fixed uniquely by requiring polynomial behavior at $t\to \infty$.

The agreement between the solution resulting from the choice $\eps_\beta=\sgn\Im (\vbt/w)$
with Bridgeland's solution relies on some novel integral representations
\eqref{WoroF}-\eqref{WoroG2} for the double and triple sine functions
appearing in Bridgeland's solution (and in earlier works on the topological
string partition function on the conifold \cite{Koshkin:2007mz,Krefl:2015vna}).
The double and triple sine functions are usually defined through Barnes' multiple gamma functions
\cite{Barnes1964,Ruijsenaars2000OnBM,Kurokawa2003MultipleSF}, and are closely related
to the quantum dilogarithm and trilogarithm.
The required integral representation for the double sine function is in fact a consequence of Woronowicz'
representation \eqref{WoroPhi} for Faddeev's quantum dilogarithm, as we show in Appendix \ref{ap-Br}.
We state the required integral representations for the triple sine function with two coinciding periods
as Conjecture \ref{conj-sine} in that Appendix. We have checked it numerically for some random values of the arguments,
and expect that it can be generalized to arbitrary periods and  proven along the same lines as \eqref{WoroF}.

\medskip

Finally, we extend the observation of \cite{Bridgeland:2017vbr} about a relation between a perturbative expansion
of the $\tau$ function and the topological string partition function.
Namely, we show that for all small crepant resolutions the expansion of \eqref{reztau-pv} at small $t$ and $\thb=\vph=0$
reproduces al terms with $g\ge 1$ in the genus expansion of the topological string,
while the genus 0 contribution can be extracted from the Plebanski potential \eqref{resW-pv}.
Moreover, we use this result as an additional argument in favor of the solution produced by the principle value prescription.

\medskip

The organization of the paper is as follows.
In the next section we review the RH problem for general BPS structures and its solution in the uncoupled BPS case,
and specify the RH
problem for the class of non-compact CY threefolds given by small crepant resolutions.
In \S\ref{sec_pass1} we present the results of our first approach where we try to sum up the known solution for
a finite uncoupled BPS structure. In \S\ref{sec_pass2} we derive new solutions following an alternative approach based
on summing up  contributions of rational DT invariants and discuss their properties.
Finally, in \S\ref{sec_top} we discuss the relation between the $\tau$ function, Plebanski potential
and the topological string partition function. Appendix \ref{ap-functions} collects definitions of various special functions
as well as their properties and various integral identities. Appendices \ref{ap-Br} and \ref{ap-new} present
special functions appearing in the solutions found in \S\ref{sec_pass1} and \S\ref{sec_pass2}, respectively.
Finally, Appendix \ref{ap-conifold} provides details
of calculations leading to the results presented in \S\ref{sec_pass1}.

\section*{Acknowledgements}

We are grateful to Tom Bridgeland and Joerg Teschner
for very useful correspondence. As this work was being completed,
we learnt that M. Alim, A. Saha and I. Tulli were investigating a closely related problem \cite{Alim:2021gtw}.
We thank them for agreeing to coordinate the release on arXiv.

\section{RH problem for small crepant resolutions}
\label{sec_start}

In this section, we recall the basic Riemann-Hilbert problem associated to a BPS structure,
its reformulation as a set of TBA-like integral equations, and its solution for an
uncoupled BPS structure, following \cite{Bridgeland:2016nqw} and our previous work \cite{Alexandrov:2021wxu}.
In \S\ref{subsec-CYs} we specialize to the case of small crepant
resolutions, where the BPS structure is uncoupled but not finite.

\subsection{RH problem in the uncoupled case}
\label{sec_RH}

For our purposes, a BPS structure consists of a local system of  rank $2N$  lattices $\Gamma_Z$,
equipped with a non-degenerate skew-symmetric pairing $\langle \cdot,\cdot\rangle$,
fibered over the space of stability conditions $\cS$, and equipped with a map $\Omega_Z:\Gamma_\star\to \IZ$
with $\Gamma_\star=\Gamma\backslash\{0\}$
satisfying various axioms \cite{Bridgeland:2016nqw}.  In particular, $\Omega_Z(\gamma)$
should be invariant under $\gamma\mapsto -\gamma$, and satisfy the Kontsevich-Soibelman wall-crossing formula
across walls of marginal stability in $\cS$.
The subscript $Z$ refers to the central charge function (a linear map
$Z:\Gamma\ni\gamma\mapsto Z_\gamma\in\IC$), which is part of the stability data
at any point on $\cS$. In the following we will drop the subscript $Z$, and use complex
coordinates $z^a=Z_{\gamma^a}$ on $\cS$ where $\{\gamma^a\}_{a=1}^{2N}$ is a basis of the lattice
$\Gamma$.

Given a BPS structure, one defines the following Riemann-Hilbert problem:
finding functions $\cX_\gamma(z,\theta,t)$ (called holomorphic Fourier modes,
or Darboux coordinates for brevity)
on $ T\cS\times \IP^1$, where $z$ is a point on the space of stability conditions $\cS$,
$\theta$ parametrizes the fiber of the
holomorphic tangent bundle and $t$ is a coordinate on $\IP^1$, such that
\be
\label{qtorus}
\cX_\gamma\, \cX_{\gamma'} = (-1)^{\langle\gamma,\gamma'\rangle} \cX_{\gamma+\gamma'}
\ee
for all $\gamma,\gamma'\in\Gamma_\star$, $(z,\theta)\in T\cS$ and $t\in\IP^1$
away from the BPS rays $\ell_{\gamma'}$ defined by
\be
\ell_{\gamma'}=\{t\in \IP^1: \ Z_{\gamma'}/t\in \I\IR^-\}.
\label{BPSray}
\ee
Across such a ray, we require that the Fourier modes $\cX_\gamma^\pm$ defined on the two sides of the BPS ray
(on clockwise and anticlockwise sides, respectively) are related by a KS symplectomorphism \cite{ks}
\be
\label{Xgdisc}
\cX_\gamma^+ = \cX_{\gamma}^- (1-\cX_{\gamma'}^-)^{\Omega(\gamma') \langle\gamma,\gamma'\rangle}.
\ee
Moreover, for $t\to 0$ we require that $\cX_\gamma$ reduces, up to exponential corrections,
to
\be
\cXsf_\gamma=\sigma_\gamma\, e^{2\pi\I(\theta_{\gamma} - Z_\gamma/t)},
\qquad
\theta_\gamma=\langle \gamma,\theta\rangle,
\label{cXsf}
\ee
where $\sigma_\gamma$ is a quadratic refinement of the skew-symmetric pairing.\footnote{
Namely a map $\sigma: \Gamma\to \IC^\times$   such that
$\sigma_\gamma\, \sigma_{\gamma'} = (-1)^{\langle\gamma,\gamma'\rangle} \sigma_{\gamma+\gamma'}$. }
For large $|t|$, we require that $\cX_\gamma(t)$ behaves polynomially, i.e.
\be
|t|^{-k}<|\cX_\gamma(t)|<|t|^k
\ee
for some $k>0$.
Finally, these conditions are supplemented by the so called reflection property which
originates in the symmetry of the DT invariants under $\gamma\to-\gamma$
and requires that \cite{Bridgeland:2016nqw}
\be
\cX_{-\gamma}(z,\hsig(\theta),-t)=\cX_\gamma(z,\theta,t),
\label{reflect-cX}
\ee
where $\hsig$ is the twisted inverse map. This map has a simple action
on the shifted  fiber coordinates $ \vth_\gamma$ defined (modulo integers) by absorbing the quadratic refinement,
\be
e^{2\pi\I \vth_\gamma}=\sigma_{\gamma}\, e^{2\pi\I \theta_\gamma}.
\label{defvth}
\ee
namely it acts by $\hsig(\vth_\gamma)=-\vth_\gamma$. Note however that $ \vth_\gamma+\vth_{\gamma'}
\neq \vth_{\gamma+\gamma'}$ modulo integers unless $\langle\gamma,\gamma'\rangle=0$.
In the following we will mostly work in terms of the shifted coordinates $\vth_\gamma$.

Given a solution to this RH problem, one constructs a complex symplectic
form $\omega=\frac{t}{2}\sum_{a,b} \omega_{ab} \de \log\cX_{\gamma_a} \de \log\cX_{\gamma_b}$  on the product
$T\cS\times \IP^1$, where
$ \omega_{ab}$ is the inverse of the integer skew-symmetric matrix
$\omega^{ab} = \langle \gamma^a, \gamma^b\rangle$. The complex HK structure
on $T\cS$ is then read off  from the Laurent expansion
$\omega=t^{-1}\omega_+ -\I \omega_3+t\omega_-$ near $t=0$.
It turns out that the resulting HK metric is encoded in a single function $W(z,\theta)$, called Plebanski potential,
which must satisfy a system of non-linear differential equations,
known as heavenly equations \cite{Bridgeland:2020zjh,Dunajski:2020qhh}.
Furthermore, this function determines a connection on $\IP^1$ such that $\cX_\gamma$ are its flat sections, which
is equivalent to the following horizontal section condition \cite{Bridgeland:2019fbi}
\be
\left[ \frac{\p}{\p z^a}+\frac{1}{t}\, \frac{\p}{\p\theta^a}
-\frac{1}{(2\pi)^2}\sum_{b,c=1}^{2N}\omega^{bc}\frac{\p^2 W}{\p \theta^a\p\theta^b}\,\frac{\p}{\p\theta^c}
\right] \cX_\gamma =0.
\label{actonX1}
\ee
The twistor space $\cZ$ is the quotient of $T\cS\times \IP^1$ by the distribution spanned
by the vector fields \eqref{actonX1} \cite{Dunajski:2020qhh}.

In \cite{Alexandrov:2021wxu}, we have recast the above RH problem
as a TBA-like system of integral equations, arising as a conformal limit \cite{Gaiotto:2014bza}
of the equations encoding instanton corrections in four-dimensional $\cN=2$ supersymmetric
gauge theories compactified on a circle \cite{Gaiotto:2008cd}
or D-instantons in compactifications of string theory on CY threefolds \cite{Alexandrov:2008gh,Alexandrov:2009zh}.
This allowed us to obtain a general formula for the Plebanski potential $W$ in terms of Penrose-like integrals
of the Fourier modes $\cX_\gamma$ \cite[Eq. (1.5)]{Alexandrov:2021wxu}.
In this paper we do not need these general results as we restrict to the case of
uncoupled BPS structures, i.e. such that $\langle \gamma,\gamma'\rangle=0$
whenever $\Omega(\gamma) \Omega(\gamma') \neq 0$.

In the uncoupled case the RH problem drastically simplifies.
Firstly, the DT invariants become independent of the central charge function $Z$.
Secondly, the TBA equations mentioned above become explicit integral {\it expressions}
for the Darboux coordinates
\bea
\cX_\gamma(t)
&=&\cXsf_\gamma(t)\,
\exp\[\frac{1}{2\pi\I}\sum_{\gamma'\in\Gamma_{\!\star}} \bOm(\gamma')
\langle\gamma,\gamma'\rangle\int_{\ell_{\gamma'}}\frac{\de t'}{t'}\,
\frac{t}{t'-t}\,\cXsf_{\gamma'}(t')\],
\label{eqTBA}
\eea
where the r.h.s. is expressed through the rational DT invariants $\bOm(\gamma)$ defined in \eqref{defbOm}.
Thus, there are no equations to solve anymore and all relevant functions can be computed by evaluating Penrose-like integrals.
In particular, for the Plebanski potential one finds the following simple result \cite[Eq. (4.5)]{Alexandrov:2021wxu}
\be
W=\frac{1}{(2\pi\I)^2}\, \sum_{\gamma\in\Gamma_\star} \bOm(\gamma)\,\frac{e^{2\pi\I \vth_{\gamma}}}{Z_\gamma}.
\label{resW}
\ee

Remarkably, for uncoupled BPS structures the Darboux coordinates \eqref{eqTBA} for different basis vectors $\gamma^a$
of the lattice $\Gamma$
(or rather, their derivatives with respect to $t$)
can all be obtained from a single generating function $\tau(z,\theta,t)$. Namely, the following equations, first
proposed in \cite{Bridgeland:2016nqw}, turn out to be integrable for any uncoupled BPS structure
\be
\sum_{b=1}^{2N} \omega^{ab} \p_{z^b} \log\tau =\p_t \log\(\cX_{\gamma^a}/\cXsf_{\gamma^a}\),
\qquad
\forall a=1,\dots, 2N,
\label{eqtau}
\ee
with the following general result \cite[Eq.(3.28)]{Alexandrov:2021wxu}
\be
\log\tau =
\frac{1}{4\pi^2}\sum_{\gamma\in\Gamma_{\!\star}}\bOm(\gamma)\, e^{2\pi\I\vth_\gamma}
\[\int_{\ell_\gamma} \frac{\de t'}{t'} \,\frac{t}{t'-t}
\(1+2\pi\I\, \frac{Z_\gamma}{t'}\)e^{-2\pi\I Z_\gamma/t'}-\log (Z_\gamma/t)\].
\label{restau}
\ee
It is important to note that the $\tau$ function is defined only up to terms independent of $z^a$.
Such terms can be safely added or dropped without affecting the Darboux coordinates $\cX_\gamma$.

If the BPS structure is finite, i.e. if $\Omega(\gamma)$ vanishes except
for a finite set of vectors $\gamma$, the above formulae can be made even more explicit.
The Darboux coordinates are then expressed in terms of
the generalized Gamma function $\Lambda(z,\eta)$ defined in \eqref{defLambda}
\be
\label{Xuncoupled}
\cX_\gamma =  e^{2\pi\I(\vth_{\gamma} - Z_\gamma/t)}
\prod_{\gamma'\in \Gamma \atop \Re(Z_{\gamma'}/t)>0}
\Lambda\( \frac{Z_{\gamma'}}{t},1-[\vth_{\gamma'}] \)^{\Omega(\gamma') \langle \gamma,\gamma'\rangle},
\ee
the $\tau$ function is given in terms of the generalized Barnes function
$\Upsilon(z,\eta)$ defined in \eqref{defUpsilon}
\be
\log\tau = \sum_{\gamma\in \Gamma \atop \Re(Z_{\gamma}/t)>0}
\Omega(\gamma)\,\log \Upsilon\(\frac{Z_\gamma}{t},1-[\vth_\gamma]\) ,
\label{taufun-res2}
\ee
while the Plebanski potential is a simple sum of Bernoulli polynomials
\be
W=-\frac{\pi\I}{3}\sum_{\gamma\in\Gamma_{\!+} } \frac{\Omega(\gamma)}{Z_\gamma}\, B_3([\vth_\gamma]),
\label{Wfinite}
\ee
where $\Gamma_{\! +}$ is the set of vectors $\gamma\in\Gamma_{\!\star} $ such that
$\Im Z_\gamma>0$ or $Z_\gamma\in \IR^-$.

\subsection{BPS structure for small crepant resolutions}
\label{subsec-CYs}

Let us now consider BPS structure associated to a non-compact CY threefold $X$ obtained as a crepant resolution
of an affine singular variety, such that the exceptional locus contains only a finite number $n$ of
rational curves and no compact divisors. Examples include the conifold obtained by resolving the quadratic singularity
$xy=zw$, or the generalized conifold obtained by resolving $xy=z^{N_0} w^{N_1}$ with
$0\leq N_1\leq N_0$, as well as quotients of the form $\IC^2/G\times \IC$ where $G$
is finite subgroup of $SL(2,\IC)$, or of the form $\IC^3/G$ where $G$ is a finite subgroup of $SO(3)$
or $G=\IZ_2\times\IZ_2$ with the two factors acting by $(-1,-1,1)$ and $(1,-1,1)$.

For such geometries, the DT invariants associated to the triangulated category of coherent sheaves
on $X$ take a very simple form. The  charge lattice decomposes as the direct sum of two rank-$N$ sublattices,
$\Gamma=\Gamma_e\oplus\Gamma_m$ where $\Gamma_e=H_2(X,\IZ)\oplus H_0(X,\IZ)$ and $\Gamma_m$
is the dual lattice $\Gamma_e^*$. Since the 4-cycles and 6-cycle corresponding
to $\Gamma_m$ are non-compact, the corresponding central charges are infinite but it turns out that
they decouple from the Riemann-Hilbert problem. Thus,
the space of stability conditions $\cS$ is effectively an open set inside $\IC^{N}$, parametrized by
the central charges $\{v^i=Z_{\alpha^i}\}$ and $w=Z_\delta$, where
$\{\alpha^i\}_{i=1}^{N-1}$ is a basis of $H_2(X,\IZ)$ and $\delta$ is the generator of $H_0(X,\IZ)$,
corresponding to the class of a point on $X$. Similarly, the fiber of the holomorphic tangent bundle
$T\cS$ can be parametrized by $\vth^i:=\vth_{\alpha^i}$ and $\vph:=\vth_\delta$. We also introduce
$B=\{\pm \beta_\ell\}_{\ell=1}^n$,
the set of the classes of rational curves and anti-rational
curves in $X$, which are integer combinations $\beta_\ell=\sum_i\beta_{\ell,i}\alpha^i$ of the basis vectors,
and its two subsets
\be
B^\pm=\{\beta\in B\ : \ \pm\Im(\vbt/w)>0\}.
\label{defH2p}
\ee
For the generalized conifold, $N=N_0+N_1$ and $n=N(N-1)/2$.

Due to the absence of
compact 4 and 6-cycles, the
DT invariants $\Omega(\gamma)$ vanish unless $\gamma\in \Gamma_e$. In particular, since the skew-symmetric pairing is non-vanishing
only between elements of $\Gamma_e$ and the dual lattice $\Gamma_m=\Gamma_e^\star$,
the corresponding BPS structure is uncoupled and the DT invariants are independent of the central
charge function. In fact, $\Omega(\gamma)$ vanishes unless the charge is of the form $\gamma =k\delta$ with a non-zero integer $k$,
or $\gamma=\beta+k\delta$ where $\beta\in B$ and $k\in \IZ$.
In the latter case, depending whether the normal bundle of the rational curve in class $\pm\beta$ is $\cO(-1)\oplus\cO(-1)$
or $\cO(0)\oplus\cO(-2)$, $\Omega(\gamma)$ is equal to $1$ or $-1$   \cite[\S 5]{Mozgovoy:2020has}.
In either case, it coincides with
the genus zero Gopakumar-Vafa invariant $n_{\beta}$. To summarize,
\be
\label{OmCY}
\Omega(\gamma)=\left\{\begin{array}{lll}
-\chi_X, \quad & \gamma=k\delta, \quad & k\in \IZ\setminus \{0\},
\\
n_\beta, \quad & \gamma=\beta+k\delta, \ & k\in \IZ, \ \beta\in B\, ,
\end{array}\right.
\ee
where $\chi_X$ is the Euler characteristic of $X$ (defined using cohomology with compact support).
Sometimes it will be convenient to combine them by setting $n_0=-\chi_X/2$.
For the resolved  conifold $X=\cO(-1)\oplus\cO(-1)\rightarrow \IP^1$, one has $N=1$,
$B=\{\pm \beta\}$ where  $\beta$ is the class of $\IP^1$, and
\be
\label{Omconifold}
\chi_X=2\, ,
\qquad
n_{\beta}=n_{-\beta} = 1.
\ee

Note that for such an uncoupled BPS structure, the factor $\Omega(\gamma') \langle\gamma,\gamma'\rangle$
in \eqref{eqTBA} always vanishes for $\gamma\in\Gamma_e$. This implies that for `electric' charges $\gamma_e\in\Gamma_e$
the solution coincides with its `semi-flat' limit, i.e. $\cX_{\gamma_e}=\cXsf_{\gamma_e}$. The non-trivial task is therefore to
determine $\cX_{\gamma}$ for `magnetic charges' $\gamma\in\Gamma_m$. In view of \eqref{qtorus},
it suffices to determine the holomorphic Fourier
modes $\cX_{\delta^\vee}$ and $\cX_{\alpha_i^\vee}$ associated to the duals of  $\delta$ and $\alpha^i$  in $\Gamma_m$.
Accordingly, we define
\be
\cY_0=\log\(\cX_{\delta^\vee}/\cXsf_{\delta^\vee}\),
\qquad
\cY_i=\log\(\cX_{\alpha_i^\vee}/\cXsf_{\alpha_i^\vee}\).
\label{defcY}
\ee
which are functions of  $z=(w,v^1,\dots, v^{N-1})$, $\vth=(\vph,\vth^1,\dots,\vth^{N-1})$ and $t$.

Thus, we arrive at the following RH problem:
\begin{prob}
\label{probRH}
Let us denote
$p_\beta(t)=e^{2\pi\I(\thb-\vbt/t)}$ and $q(t)=e^{2\pi\I(\vph-w/t)}$.
Find functions $\cY_\mu(z,\vth,t)$ ($\mu=0,1,\dots, N-1$) satisfying the following
conditions
\begin{enumerate}
\item
For all $\mu$, $\exp(\cY_\mu)$
is meromorphic in the domain\footnote{In contrast to the RH problems in \cite{Bridgeland:2017vbr,Bridgeland:2019fbi},
we relax the meromorphy condition by excluding the locus $q=1$.
This weaker condition is necessary in order that solutions exist at all, in fact
the solution found in \cite{Bridgeland:2019fbi} fails to satisfy
the stronger meromorphy condition for non-vanishing fiber coordinates.}
\be
\begin{split}
t\in&\, \IC\setminus\Bigl(\ell_{\delta}\cup\ell_{-\delta}\cup(\cup_{\beta\in B}\cup_{k\in \IZ} \ell_{\beta+k\delta})\cup \{q(t)=1\}\Bigr) ;
\end{split}
\nn
\ee
\item
$\cY_\mu$
are discontinuous across active BPS rays, with jumps given by
\be
\begin{array}{rlrl}
\mbox{across }\ell_{\beta+k\delta}:\quad
\Delta\cY_0=& -kn_\beta \log\(1-p_\beta\,q^k\),
\quad &
\Delta\cY_i=& -\beta_i  n_\beta \log\(1-p_\beta\, q^k\),
\\
\mbox{across }\ell_{\pm \delta}:\quad
\Delta\cY_0=& \mp \chi_X\sum_{k=1}^\infty 2k\log\(1-q^{\pm k}\),
\quad &
\Delta\cY_i=& 0;
\end{array}
\label{Pr-jumps}
\ee

\item Growth condition:
$\cY_\mu=o(1)$ as $t\to 0$ and $\cY_\mu=O(\log |t|)$ as $t\to \infty$;

\item
Reflection property:
\be
\cY_\mu(z,-\vth,-t)=-\cY_\mu(z,\vth,t).
\label{reflect-cY}
\ee

\end{enumerate}
\end{prob}

In the case of the conifold, this RH problem was solved  on
the `zero section' $\vth=0$ of the holomorphic tangent bundle  in \cite{Bridgeland:2017vbr}.
Its extension to general values of fiber coordinates was later determined in  \cite{Bridgeland:2019fbi}.
Our goal will be to revisit these results using the explicit
solution of the RH problem for uncoupled BPS structures reviewed in
\S\ref{sec_RH}. The main problem with this solution is that the resulting
sums over charges turn out to be not absolutely convergent.
Hence, even if they can be defined, such definition may not be unique.
For example, they may depend on the order of different summations and integrations.
In the next two sections we present two different strategies to solve this problem.

\section{Naive resummation of the results for finite BPS structures}
\label{sec_pass1}

The simplest strategy to get a solution of the Riemann-Hilbert problem specified in the previous section,
i.e. to find the functions \eqref{defcY} and the associated $\tau$ function,
is to try to perform the sum over charges of the functions representing the solution in the finite case.
However, one immediately encounters the following problem. As apparent from \eqref{Xuncoupled}
and \eqref{taufun-res2}, all these functions depend on the fiber coordinates $\vth_\gamma$
through the bracket $[\vth_\gamma]$ defined in \eqref{defbr}, which ensures
 periodicity with respect to integral shifts of $\Re\vth_\gamma$.
This is just a manifestation of the fact that they follow from the integral formula \eqref{eqTBA}
which is manifestly invariant under such shifts. As a result, the sum over charges
will produce a sum over $k\in\IZ$ of functions depending
on $[\thb+k\vph]$. Such infinite sums are discontinuous at every
$\Re\vph\in \IQ$, which is of course not acceptable. To avoid this problem,
we {\it postulate} that the brackets around $\vph$ should be ignored,
namely we set by hand $[\thb+k\vph]=[\thb]+k\vph$.
This can be interpreted as the result of an analytic continuation from $\Re\vph=0$.\footnote{To avoid cluttering,
we will also omit the brackets on $\thb$ concentrating on the domain where they belong to the interval $(0,1)$.}

Having dropped the brackets, we use the Binet-like integral representations for the generalized Gamma and Barnes functions
given in Appendix \ref{ap-functions}, exchange the sum over charges with the integral and perform the sum over $k.$
Below we summarize the main results of this procedure,
relegating details to Appendix \ref{ap-conifold}.

\begin{itemize}

\item
For $\vth_\gamma=0$, in the case of the conifold, this procedure applied to \eqref{Xuncoupled} and \eqref{taufun-res2}
leads to a set of integrals along the real axis, with an integrand having a pole at the origin
and defined through the principal value prescription.
The resulting Darboux coordinates and $\tau$ function turn out to coincide with the
solution of the Riemann-Hilbert problem constructed in \cite{Bridgeland:2017vbr}.

\item
For more general small crepant resolutions and vanishing $\vth_\gamma$, the pole in the integrals expressing $\cY_0$ becomes
of higher order, unless
\be
\sum_{\beta\in B^+} n_\beta=\hf\,\chi_X,
\label{condchi}
\ee
which holds coincidentally for the conifold.\footnote{For the generalized conifold with
 $(N_0,N_1)=(2,1)$, also known as  the suspended pinch point singularity,
the GV invariants take values $\{1,1,-1\}$ for the three rational curves, while $\chi_X=3$
(see example 5.14 in \cite{Mozgovoy:2020has}).}
As a result, the integrals cannot be defined by the principal value prescription anymore.
Nevertheless, if one {\it ignores} this divergence by shifting the integration contour
away from the pole by hand, one arrives at a solution which is a superposition of the functions
appearing in the solution for the conifold, as was suggested in \cite{Bridgeland:2019fbi}.

\item
For $\vth_\gamma\ne 0$,
after performing the summation over $k$, one ends up with divergent integrals even in the conifold case.
If, as above, one {\it ignores} this divergence by changing the integration contour
(and drops a few other divergent contributions),
it is possible to bring the functions introduced in \eqref{defcY} to the following form
\bea
\cY_i\Br &=& \sum_{\beta \in B^+}\beta_i \,n_\beta\[ \cF_{0,0}(\vbt-t\thb|w-t\vph, -t)
-\(\frac{\vbt\vph-w\thb}{w}+\hf\)\log\(1-e^{2\pi\I \vbt/w}\)
\right.
\nn\\
&&\left.
+\frac{w-t\vph}{2\pi\I t}\, \Li_2\(e^{2\pi\I  \vbt/w}\)\],
\label{solBridge}\\
\cY_0\Br &=&
\sum_{\beta \in B^+\cup\{0\} }n_\beta\[ \cF_{1,0}(\vbt-t\thb|w-t\vph, -t)
+\frac{\vbt}{w}\(\frac{\vbt\vph- w\thb}{w}+\hf\)\log\(1-e^{2\pi\I \vbt/w}\)
\right.
\nn\\
&&\left.
+\(\frac{1}{\pi\I}\(\frac{\vbt\vph-w\thb}{w}+\frac14\)-\frac{\vbt-t\thb}{2\pi\I t}\) \Li_2\(e^{2\pi\I  \vbt/w}\)
-\frac{w-t\vph}{2\pi^2 t}\, \Li_3\(e^{2\pi\I  \vbt/w}\)\],
\nn
\eea
where the functions $\cF_{n,m}$ are defined in \eqref{deffunF} and we recall that $n_0=-\hf\, \chi_X$.
In the conifold case where the GV invariants are given by \eqref{Omconifold},
this solution coincides with the one in \cite{Bridgeland:2019fbi}.
This is the reason why we labelled it by the superscript $\Br$ to distinguish it from another solution discussed in the next section.

\item
The $\tau$ function shows a similar behavior. If the condition \eqref{condchi} does not hold,
or  if $\vth_\gamma$ are non-zero,
it also has a divergence. (For $\vth_\gamma=0$, the divergence can be traced back to the last logarithmic term in \eqref{restau}.)
Proceeding then as for Darboux coordinates by shifting the contour and dropping some terms, it turns out to be possible to get
the $\tau$ function corresponding to the solution \eqref{solBridge},
\be
\begin{split}
\log\tau\Br =&\,
\sum_{\beta \in B^+\cup\{0\} }n_\beta\biggl[
\cF_{0,1}(\vbt-t\thb|w-t\vph,-t)+\vph\,\cF_{1,0}(\vbt-t\thb|w-t\vph, -t)
\biggr.
\\
&\,\biggl.
+\thb\, \cF_{0,0}(\vbt-t\thb|w-t\vph, -t)
+\frac{w^2}{(2\pi\I t)^2}\,\Li_3\(e^{2\pi\I \vbt/w}\)\biggr].
\end{split}
\label{taufun-Bridge}
\ee
It generalizes the $\tau$ function found in \cite{Bridgeland:2017vbr} to $\vth_\gamma\ne 0$ and beyond the conifold case.

\item
For the Plebanski potential, however, the sum in \eqref{Wfinite} does not make any sense and does not allow to reproduce the result
found in \cite{Bridgeland:2019fbi}.

\end{itemize}

Thus, we conclude that the approach based on the summation of the results
for finite BPS structures requires {\it ad hoc} manipulations
with divergent integrals to produce a well defined solution.
Such manipulations involve many ambiguities and cannot be justified.
Furthermore, this approach completely fails for the Plebanski potential.

\section{New solutions of the RH problem}
\label{sec_pass2}

Having shown the difficulties in summing up the contributions from the infinite trajectories of active BPS charges directly,
we shall now explore a different strategy, which uses the formulation of the conformal TBA equations in terms
of rational DT invariants as a starting point.
We propose a prescription for making sense of the sums over active charges,
which however has an inherent ambiguity parametrized by $\eps=\{0,\pm\}$.
This gives rise to new solutions to the RH problem for the conifold and other small crepant resolutions,
which we analyze and compare.

\subsection{Strategy}
\label{subsec-strategy}

An important step in arriving at the results \eqref{Xuncoupled} and \eqref{taufun-res2} relevant
for finite BPS structures was to combine the contributions of opposite charges $\pm\gamma$
\cite{Gaiotto:2014bza,Barbieri:2018swu,Alexandrov:2021wxu}. However, in the presence of
an  infinite BPS spectrum and for fiber variables
with a non-vanishing imaginary part, this recombination might be not innocuous.
Indeed, let us consider as an example (relevant for small crepant resolutions)  a `trajectory'
of active charges $\gamma=\beta+k\delta$ with the same DT invariant. Such a trajectory
typically produces contributions of the form
\be
\sum_k (\,\cdots) \, n_\beta\,e^{2\pi\I (\thb+k\vph)}
\ee
to the Darboux coordinates or to the $\tau$ function, where the factor in bracket
grows or decays polynomially
as a function of $k$. Such a series is only convergent provided $\Im\vph$
has a definite sign (positive if the sum runs over positive $k$). On the other hand, there is a similar contribution
from opposite charges where the sign in the exponential is flipped and hence convergence requires
the opposite sign of $\Im\vph$. Thus, these two incompatible convergence requirements
prevent us from combining the two contributions.

Note that due to the problem explained in the beginning of \S\ref{sec_pass1},
we do not have the option of setting $\Im\vph=0$ and then analytically continuing to complex values, as this would lead
to functions which are discontinuous at every $\Re\vph\in \IQ$. Instead,  we should rather set
$\Re\vph=0$ and then analytically continue away from the imaginary axis.

This consideration suggests the following strategy.
Rather than combining the contributions of the opposite charges {\it before} summing along each trajectory
in the charge lattice, one should first perform each sum separately.
To make these sums feasible, it is furthermore convenient to work in terms of rational DT invariants \eqref{defbOm}.
Although the rational invariants are supported on a larger set of charge vectors
(in particular, it is infinite even for a finite BPS structure),
most of expressions considerably simplify when written in terms of  $\bOm(\gamma)$.
Then, as we shall see below, all relevant quantities can be expressed through the following sum
\be
S(v,w,\vph)=\sum_{k\in\IZ}^\infty \frac{ e^{2\pi\I k \vph}}{v+k w}.
\label{Seriesk}
\ee
This sum is still problematic because it involves both positive and negative $k$ and hence it is divergent for any $\Im\vph\ne 0$.
To make sense out of it, we {\it define} it to be the function
\be
S'(v,w,\vph,\vph')=\sum_{k=0}^\infty \frac{ e^{2\pi\I k \vph}}{v+wk}
+\sum_{k=1}^\infty \frac{ e^{2\pi\I k \vph'}}{v-wk}\, ,
\label{seriesk2}
\ee
which is well defined for $\Im\vph,\Im\vph'>0$, analytically continued to $\vph'=-\vph$.

For example, if one applies this prescription to $S(1,0,\vph)$, one obtains
\be
\sum_{k\in\IZ}e^{2\pi \I k \vph}
=\[\frac{ 1}{1- e^{2\pi\I\vph}}+\frac{ e^{2\pi \I \vph'}}{1- e^{2\pi\I \vph'}}\]_{\vph'=-\vph}=0,
\label{sumcos-der}
\ee
rather than the usual Dirac comb distribution.
In the general case, the sum can be evaluated using the obvious relation
\be
\sum_{k=0}^\infty\frac{x^{k+a}}{k+a}=\int_0^x \frac{\de y}{y} \, \frac{y^a}{1-y},
\qquad
|x|<1, \ \Re a>0.
\label{sumint}
\ee
Assuming for concreteness\footnote{Note that this assumption can easily be dropped.
Indeed, if it does not hold, one shifts the summation variable in
\eqref{Seriesk} as $k\to k-n$ with $n=\fl{\Re(v/w)}$, so that one gets
$$
S(v,w,\vph)=e^{-2\pi\I n\vph }S(v',w,\vph),
$$
where $v'=v-nw$. The resulting exponential factor is combined with a similar factor in \eqref{Series4} to produce the full $v$,
whereas in all three possible results in \eqref{possibleInt}
$v'$ can be replaced by $v$ due to their invariance under integral shifts of $v/w$.
Thus, one arrives at the same result.}
that $0<\Re(v/w)<1$ and denoting $x=e^{2\pi\I \vph}$, $x'=e^{2\pi\I \vph'}$,
one then finds
\be
S'(v,w,\vph,\vph')=
\frac{e^{-2\pi\I \vph v/w}}{w}\[\int_0^x \frac{\de y}{y} \, \frac{y^{v/w}}{1-y}
+(xx')^{v/w}\int_{1/x'}^\infty \frac{\de y}{y} \, \frac{y^{v/w}}{1-y}\],
\label{seriesk3}
\ee
where in the second integral we inverted the integration variable.
Making the analytic continuation to $x'=1/x$, one arrives at
\be
S(v,w,\vph)=\frac{e^{-2\pi\I  \vph v/w}}{w}\int_0^\infty \frac{\de y}{y} \, \frac{y^{v/w}}{1-y}\, .
\label{Series4}
\ee
The problem now boils down to the fact that there is a pole at $y=1$ and one should specify how it is treated.
Effectively, this encodes all the ambiguity of the solution to our RH problem.
There are three natural possibilities:
\be
\int_0^\infty \frac{\de y}{y} \, \frac{y^{v/w}}{1-y}=
\begin{cases}
\pi \, \cot(\pi v/w) \quad & \mbox{principal value prescription,}
\\
\frac{\pi\, e^{\pi \I v/w}}{\sin(\pi v/w)} \quad &  \mbox{contour is above the pole,}
\\
\frac{\pi\, e^{-\pi \I v/w}}{\sin(\pi v/w)} \quad & \mbox{contour is below the pole.}
\end{cases}
\label{possibleInt}
\ee
We shall label  these three possible prescriptions by $\eps=0$ and $\pm$, respectively.
They lead  to three possible definitions of the function \eqref{Seriesk}
\be
S^{(\eps)}(v,w,\vph)=\frac{\pi}{w}\, e^{-2\pi\I  \vph v/w}\Bigl[\cot(\pi v/w)+\I\eps \Bigr].
\label{Seps}
\ee

One can argue that the principle value prescription is preferred for the following reasons.
First, the function \eqref{Seriesk} has the obvious symmetry $S(v,-w,-\vph)=S(v,w,\vph)$,
which is realized by \eqref{Seps} only for $\eps=0$. Second, at $\vph=0$ the sum in \eqref{Seriesk} becomes absolutely convergent
after combining contributions of $k$ and $-k$ and leads to the unambiguous result
\be
S(v,w,0)=\frac{1}{v}-\frac{2v}{w^2}\sum_{k=1}^\infty \frac{1}{k^2-(v/w)^2}=\frac{\pi}{w}\, \cot(\pi v/w),
\label{evalS0}
\ee
where we used \eqref{sumcos0} at $x=0$, which again agrees with \eqref{Seps} only for $\eps=0$.
Nevertheless, we shall entertain the possibility of choosing $\eps\ne 0$, as
it will allow us to expose possible ambiguities and make contact with results in the literature.

Below we express the Darboux coordinates, $\tau$ function and Plebanski potential
for arbitrary small crepant resolution in terms of the function $S$ and, replacing it by $S^{(\eps)}$ \eqref{Seps},
obtain explicit form of possible solutions to the RH problem. Then in \S\ref{subsec-compare}
we will analyze their properties and relations.

\subsection{Darboux coordinates}
\label{Dc-new}

Substituting \eqref{eqTBA} and the BPS data \eqref{OmCY} into the functions \eqref{defcY},
one finds that they can be written in the following form
\be
\cY_i=\sum_{\beta\in B} \tn_\beta \,\beta_i \,\cYa,
\qquad
\cY_0=\sum_{\beta\in B\cup\{0\}} \tn_\beta\, \cYb,
\label{cY-1}
\ee
where we introduced\footnote{Recall that $n_0$ was  defined below \eqref{solBridge}  as $-\hf\, \chi_X$, here we need twice this value.
Note also that the functions $\cYa$, $\cYb$ defined here should not be confused
the functions defined in \eqref{cYabc}.
Here they include the contribution of only one $\beta\in B$, whereas in
Appendix \ref{ap-conifold}
they combine the contributions of $\beta$ and $-\beta$.
We hope that this abuse of notation will not lead to any confusion.}
\be
\tn_\beta=\left\{ \begin{array}{ll}
n_\beta, & \qquad \beta\in B,
\\
-\chi_X, & \qquad \beta=0,
\end{array}\right.
\label{deftnb}
\ee
and
\be
\begin{split}
\cYa =&\, -\sum_{n=1}^\infty \frac{1}{2\pi\I n}\sum_{k\in\IZ}
\int_{\ell_{\beta+k\delta}}\frac{\de t'}{t'}\, \frac{t\, \cXsf_{n(\beta+k\delta)}(t')}{t'-t}\, ,
\\
\cYb =&\, -\sum_{n=1}^\infty \frac{1}{2\pi\I n}\sum_{k\in\IZ} k
\int_{\ell_{\beta+k\delta}}\frac{\de t'}{t'}\, \frac{t\, \cXsf_{n(\beta+k\delta)}(t')}{t'-t}\, .
\end{split}
\ee
Changing the integration variable as $t'=\I(\vbt+kw)/s$ and denoting $\hvb=\vbt/t$, $\hw=w/t$,
these functions are expressed through the function \eqref{Seriesk} as follows
\be
\begin{split}
\cYa =&\, \sum_{n=1}^\infty \frac{1}{2\pi n}\int_0^\infty \de s\,e^{-2\pi n (s-\I\thb)} S(\hvb+\I s,\hw,n\vph)\, ,
\\
\cYb =&\, \sum_{n=1}^\infty \frac{1}{2\pi n \hw}\int_0^\infty \de s \,e^{-2\pi n (s-\I\thb)}\sum_{k\in\IZ}
\Bigl(e^{2\pi \I n k\vph}-(\hvb+\I s)\,S(\hvb+\I s,\hw,n\vph)\Bigr)\, .
\end{split}
\ee
Then, using \eqref{sumcos-der} and \eqref{Seps}, and performing the sum over $n$, they can be expressed
as the following integrals
\bea
\cYai{\eps} &=&  \frac{1}{2\I\hw}\int_0^\infty \de s\,\Li_1\(e^{-2\pi (1-\vph/\hw)s+2\pi\I (\thb -\vph\hvb/\hw)}\)
\Bigl(\coth(\pi(s-\I \hvb)/\hw)-\eps\Bigr) ,
\\
\cYbi{\eps}&=& -\frac{1}{2\hw^2}\int_0^\infty \de s
\, (s-\I \hvb)\,
\Li_1\(e^{-2\pi (1-\vph/\hw)s+2\pi\I (\thb -\vph\hvb/\hw)}\)
\Bigl(\coth(\pi(s-\I \hvb)/\hw)-\eps\Bigr),
\nn
\eea
where $\eps$ indicates the ambiguity in the function $S$ which we left unresolved.
Finally, the first terms in these integrals can be rewritten in terms of the functions
\eqref{deffunJ}, whereas the second terms can be explicitly integrated:
\be
\begin{split}
\cYai{\eps} =&\, \frac{1}{2\I \hw}\,\cJ_{1,0}(\hw,\hvb,\vph,\thb)
+\frac{\I \eps\,\Li_2\( e^{2\pi\I (\thb -\vph\hvb/\hw)}\)}{4\pi(\hw-\vph)}\,,
\\
\cYbi{\eps}=&\, \frac{1}{2\hw^2}\Bigl(\cJ_{1,1}(\hw,\hvb,\vph,\thb)
+\I \hvb \cJ_{1,0}(\hw,\hvb,\vph,\thb) \Bigr)
\\
&\,
+\frac{\eps\,\Li_3\( e^{2\pi\I (\thb -\vph\hvb/\hw)}\)}{8\pi^2(\hw-\vph)^2}
+\frac{\eps\hvb\,\Li_2\( e^{2\pi\I (\thb -\vph\hvb/\hw)}\)}{4\pi\I\hw (\hw-\vph)}\, .
\end{split}
\label{rescY}
\ee

Of course, the resulting Darboux coordinates
by construction satisfy the necessary properties of a solution to the RH problem,
such as the right KS transformations across BPS rays, the reflection property, etc.
Nevertheless, it is instructive to check that they indeed hold. For example, the KS transformations \eqref{Pr-jumps} follow
from Propositions \ref{porp-jump} and \ref{porp-jump0}\footnote{The KS transformations across $\ell_{\pm \delta}$ are generated only
by the contribution from $\beta=0$. In this case, the discontinuities are captured by the first term in
the second equation of \eqref{rescY} which is proportional to $\cJ_{1,1}(\hw,0,\vph,0)$. This function differs from
the function $\tcJ_{1,1}(\hw,\vph)$ defined in \eqref{deffunJ0} by a regular term, hence
its discontinuities are described by Proposition \ref{porp-jump0}.} in Appendix \ref{ap-new} describing the jumps of the functions $\cJ_{m,n}$
across hypersurfaces which, upon substitution of the arguments as in \eqref{rescY}, precisely coincide with the BPS rays
\eqref{BPSray} for active charge vectors.

\subsection{$\tau$ function}
\label{subsec-taufunnew}

Our starting point is the formula \eqref{restau}. For the BPS structure \eqref{OmCY}, it takes the form
\be
\log\tau =  \sum_{\beta\in B\cup \{0\}}\tn_\beta\, (\cT_\beta+\cL_\beta),
\label{tau-full1}
\ee
where
\bea
\cT_\beta &=&
\sum_{n=1}^\infty \frac{1}{4\pi^2 n^2}
\!\!\! \sum_{\beta\ne 0:\ k\in\IZ\atop \beta=0:\ k\in \IZ\setminus\{0\}} \!\!\!
e^{2\pi\I n (\thb+k\vph)}
\int\limits_{\ell_{\beta+k\delta}} \frac{\de t'}{t'} \,\frac{t}{t'-t}
\(1+2\pi\I n\, \frac{\vbt+k w}{t'}\)e^{-2\pi\I\, \frac{\vbt+kw}{t'}},
\label{cTbeta}
\\
\cL_\beta &=& -\frac{1}{4\pi^2}\sum_{n=1}^\infty \frac{1}{n^2}
\!\!\! \sum_{\beta\ne 0:\ k\in\IZ\atop \beta=0:\ k\in \IZ\setminus\{0\}} \!\!\!
e^{2\pi\I n (\thb+k\vph)}\log\bigl(n (\hvb+k\hw)\bigr).
\label{cLbeta}
\eea
Note that for vanishing $\beta$, the value $k=0$ is excluded from the sum.

First, let us evaluate the contribution $\cT_\beta$. Changing the integration variable as in the previous subsection,
it can also be expressed through the function \eqref{Seriesk}, which gives
\be
\cT_\beta =
\sum_{n=1}^\infty \frac{1}{4\I\pi^2 n^2}\int_0^\infty \de s\, (1+2\pi n s)\,e^{-2\pi n (s-\I\thb)}
\(S(\hvb+\I s,\hw,n\vph)-\frac{\delta_{\beta=0}}{\I s}\),
\label{cTbeta2}
\ee
where the last term originates from the absence of one term in the sum.
Then, proceeding as for Darboux coordinates, we substitute \eqref{Seps} and perform the sum over $n$.
For $\beta\ne 0$, the result can again be expressed through the functions \eqref{deffunJ} and reads
\be
\cT^{(\eps)}_\beta=
\frac{1}{4\pi \hw}\,\Bigl(\cJ_{2,0}(\hw,\hvb,\vph,\thb)
+2\pi \cJ_{1,1}(\hw,\hvb,\vph,\thb)\Bigr)
+\frac{\eps(2\hw-\vph)}{8\pi^2(\hw-\vph)^2}\,\Li_3\( e^{2\pi\I (\thb -\vph\hvb/\hw)}\).
\label{res-cTbeta}
\ee
On the other hand, for $\beta=0$ this cannot be done because the function $\cJ_{2,0}(x,y,\xi,\eta)$ has a pole at $y=0$.
In \eqref{cTbeta2} this pole is cancelled by the last term. Thus, instead of the functions $\cJ_{m,n}$,
the contribution $\cT_0$ can be written in terms of their modified version $\tcJ_{m,n}$ defined in \eqref{deffunJ0}.
As a result, one obtains
\be
\cT^{(\eps)}_0=
\frac{1}{4\pi \hw}\,\Bigl(\tcJ_{2,0}(\hw,\vph)+2\pi \tcJ_{1,1}(\hw,\vph)\Bigr)
+\frac{\eps(2\hw-\vph)\zeta(3)}{8\pi^2(\hw-\vph)^2}\, .
\label{res-cT0}
\ee

Next, we turn to the contribution \eqref{cLbeta}. As  written, it is clearly divergent.
However, recall that the $\tau$-function defined by \eqref{eqtau} is ambiguous up to terms independent of $\vbt$ and $w$.
Therefore, we can safely ignore them in our calculations, and we will use the sign $\simeq$
to denote equalities valid up to such terms.
In fact, it turns out that all divergences in $\cL_\beta$ come from such irrelevant contributions,
whereas the relevant part of this function is convergent.
Indeed, for $\beta=0$ we have
\be
\cL_0\simeq \frac{1}{4\pi^2}\sum_{n=1}^\infty \frac{1}{n^2}\, \log(\hw)=\frac{1}{24}\,\log(\hw) ,
\label{res-cL0}
\ee
whereas for non-vanishing $\beta$, one obtains an integral of the function \eqref{Seriesk}:
\be
\cL_\beta \simeq  -\frac{1}{4\pi^2}\sum_{n=1}^\infty \frac{e^{2\pi\I n \thb}}{n^2}\[
\int_0^{\hvb} \de y \(S(y,\hw,n\vph)-\frac{1}{y}\)
+\log(\hvb/\hw) \].
\ee
Substituting \eqref{Seps} and performing the sum over $n$, the resulting integral can be expressed through another
special function defined in \eqref{def-cIn}, which gives
\be
\cL^{(\eps)}_\beta \simeq
-\frac{1}{4\pi^2}\, \cI_2(\vbt/w,\vph,\thb)
+\frac{\eps}{8\pi^2\vph}\,\Bigl(\Li_3\(e^{2\pi\I (\thb-\vph \hvb/\hw)}\)- \Li_3\(e^{2\pi\I \thb}\)\Bigr)\, .
\label{res-cLbeta}
\ee
Note that the last term is actually regular at $\vph=0$. For $\eps=0$, plugging \eqref{res-cTbeta}, \eqref{res-cT0},
\eqref{res-cL0} and \eqref{res-cLbeta} into \eqref{tau-full1}, one recovers the formula \eqref{reztau-pv} for the $\tau$ function
given in the Introduction.

It is straightforward to check that for any $\eps$ the resulting $\tau$ function is consistent with the Darboux coordinates
found in the previous subsection. Indeed, in our case the relation \eqref{eqtau} is equivalent to the conditions
\be
\begin{split}
\frac{\p}{\p \vbt}\(\cT^{(\eps)}_\beta+\cL^{(\eps)}_\beta\)=&\,-t\,\frac{\p\cYai{\eps}}{\p t},
\\
\frac{\p}{\p w}\(\cT^{(\eps)}_\beta+\cL^{(\eps)}_\beta\)=&\,-t\,\frac{\p\cYbi{\eps}}{\p t},
\end{split}
\ee
which are easily verified by noticing that
the Darboux coordinates are annihilated by the operator  $t\p_t+\hw\p_{\hw}+\hvb\p_{\hvb}$,
and by using the relations from Proposition \ref{prop-ident} and the first relation in \eqref{ident-more}.

\subsection{Plebanski potential}

The Plebanski potential $W$ determining the metric on the holomorphic tangent bundle for
an uncoupled BPS structure is given by the formula \eqref{resW}.
In our case it takes the form
\be
W=-\chi_X W_0+\sum_{\beta\in B}n_\beta\, W_\beta,
\label{Wtoric}
\ee
where
\bea
W_0 &=&\frac{1}{(2\pi\I)^2 w}\, \sum_{n=1}^\infty \frac{1}{n^3}\sum_{k\ne 0} \frac{e^{2\pi n k\vph}}{k}\, ,
\label{W0}\\
W_\beta &=& \frac{1}{(2\pi\I)^2}\, \sum_{n=1}^\infty \frac{e^{2\pi\I n \thb}}{n^3} S(\vbt,w,n\vph) \, .
\label{Wbeta}
\eea
We define the first function following the same strategy as in \S\ref{subsec-strategy}.
Then the sum over $k$ produces a difference of two logarithms which can be recombined (see \eqref{Li-ident})
leaving behind a linear function of $\vph$.\footnote{This calculation can also be done using the function $S$. Indeed, one has
$$
\sum_{k\ne 0} \frac{e^{2\pi n k\vph}}{k}=\lim_{v\to 0}\(S(v,1,\vph)-\frac{1}{v}\)=-2\pi\I\vph+\pi\I \eps.
$$
}
However, the Plebanski potential is defined only up to a linear function
in  fiber coordinates because only its second derivatives with respect to $\theta^a$
appear in the metric components, the section condition \eqref{actonX1}
and other geometric relations \cite{Bridgeland:2019fbi,Bridgeland:2020zjh}.
Therefore, we can drop the first term proportional to $W_0$ in \eqref{Wtoric}.
The remaining contribution is obtained by substituting \eqref{Seps}, which gives
\be
W_\beta^{(\eps)}=-\frac{1}{4\pi w}\,
\Li_3\(e^{2\pi\I(\thb-\vph \vbt/w)}\)\Bigl( \cot(\pi \vbt/w)+\I \eps\Bigr).
\label{resWbeta}
\ee
Note that the first term in the brackets in \eqref{resWbeta} is an odd function of $\beta$.
Therefore, its contribution to $W$ can be
further simplified by recombining $W_\beta$ and $W_{-\beta}$ using the identity \eqref{Li-ident}
which allows to replace $\Li_3$ by the Bernoulli polynomial $B_3$.
As a result, for $\eps=0$ one finds exactly the same function which is presented in \eqref{resW-pv}.

It is straightforward to check that for any $\eps$ the resulting Plebanski potential
satisfies the section condition \eqref{actonX1}.
Indeed, in our case it is equivalent to the conditions
\be
\begin{split}
&
\(\frac{\p}{\p\hvb}+\frac{\p}{\p\thb}\)\cYai{\eps}
=-\frac{t}{2\pi\I}\, \frac{\p^2 W_\beta^{(\eps)}}{\p\thb^2},
\qquad
\(\frac{\p}{\p\hw}+\frac{\p}{\p\vph}\)\cYbi{\eps}
=-\frac{t}{2\pi\I}\, \frac{\p^2 W_\beta^{(\eps)}}{\p\vph^2},
\\
&\qquad\qquad
\(\frac{\p}{\p\hw}+\frac{\p}{\p\vph}\)\cYai{\eps}
=
\(\frac{\p}{\p\hvb}+\frac{\p}{\p\thb}\)\cYbi{\eps}
=-\frac{t}{2\pi\I}\, \frac{\p^2 W_\beta^{(\eps)}}{\p\thb\p\vph},
\end{split}
\ee
which are easily verified using the last two relations in \eqref{ident-more}.

\subsection{Comparison of different solutions}
\label{subsec-compare}

Let us now analyze and compare the solutions found in the previous subsections.
First of all, one can note that they differ by quite simple terms given by polylogarithms (see \eqref{rescY}).
Thus, in this approach the ambiguity is rather mild. Nevertheless, it is desirable to fix it somehow.

To this end, let us first consider  the analytic properties of these solutions.
In Problem \ref{probRH} we required the functions $\exp(\cY_\mu)$
to be meromorphic away from the BPS rays and the locus $e^{2\pi\I (\vph-\hw)}=1$,
which was not excluded in the RH problem of \cite{Bridgeland:2017vbr,Bridgeland:2019fbi}.
The reason for this modification is due to the existence of a  pole at $\xi=x$
in the functions $\cJ_{m,n}(x,y,\xi,\eta)$ defined and analyzed in  Appendix \ref{ap-new}.
The pole leads to an essential singularity at $t=w/\vph$ in the Darboux coordinates $\cX_\gamma$.
It is natural to ask if requiring the absence of singularity at this point could resolve the ambiguity.
The singularity is certainly present for the solution with $\eps=0$.
For $\eps=\pm$, comparing \eqref{rescY} with the representation \eqref{deffunJ-an} of $\cJ_{m,n}$
one observes that the $\eps$-dependent terms exactly cancel the poles of the functions $\cJ_{m,n}$
provided one can identify $\eps$ with $\veps=\sgn\Re(w/t)$.
Unfortunately, the choice $\eps=\veps$ appears to be inconsistent.
Indeed, $\veps$ takes different values in the two half-planes of the $t$-complex plane
separated by the BPS rays $\ell_\delta\cup\ell_{-\delta}$. Therefore, identifying $\eps=\veps$, we take the solution with $\eps=+$
in one half-plane and with $\eps=-$ in the other. But this would spoil the KS transformations \eqref{Pr-jumps}
across the BPS rays $\ell_{\pm\delta}$. Hence, the singularity is always present, at least in one of the half-planes.

Let us now address the relation to the solution found in \cite{Bridgeland:2017vbr,Bridgeland:2019fbi}.
At first sight, all our results for Darboux coordinates, $\tau$ functions and
Plebanski potential look quite different from those appearing in these references. We shall now argue that they actually
agree for a suitable choice of prescription $\eps\in\{0,\pm\}$,
which however must be chosen independently for each $\beta\in B\cup \{0\}$.
While such a dependence may at first sound unnatural, it is in fact acceptable
since D2-branes wrapped on different rational curves have vanishing antisymmetric pairing.
More precisely, let us split the active charges into three classes
of trajectories, $\beta\in B^\pm$ and $\beta=0$, and choose independent
prescriptions $(\eps_+,\eps_-,\eps_0)$ for each of them.
While such a choice does not allow to improve any properties of the solutions
(in particular, it does not allow to remove the essential singularity),
 it does allow to recombine the contributions of $\pm \beta$ in the Plebanski potential \eqref{Wtoric}. Indeed,
for $\eps_\pm=\pm$, one finds that (recall that $W_0$ can be dropped)
\be
W\Br\equiv \sum_{\beta\in B^+}n_\beta\, \(W^{(+)}_\beta+W^{(-)}_{-\beta}\)
=\frac{2\pi^2}{3 w}\sum_{\beta\in B^+}n_\beta \, \frac{ B_3\(\[\thb-\vph\, \frac{\vbt}{w}\]\)}{1-e^{-2\pi\I \vbt/w}}\, .
\label{WBr}
\ee
Provided one keeps only the relevant cubic term of the Bernoulli polynomial $B_3(x)=x^3+\dots$,
this is precisely the Plebanski potential
obtained for the conifold in \cite[\S9.5]{Bridgeland:2019fbi}.\footnote{In \cite{Bridgeland:2019fbi}
it was called `Joyce function' and was denoted by $J$.
Note that  the normalization of the coordinates $(z,\theta)$  in {\it loc. cit.}
differs from our conventions by a factor of $2\pi\I$; moreover the potential \eqref{WBr}
is odd in the fiber coordinates due to \eqref{symBn} and the property of the square brackets noticed below \eqref{defbr}.}

This observation suggests that the prescription $\eps_\beta=\sgn\Im ( \vbt/w)$ generates Bridgeland's solution.
Indeed, we claim that the functions given in \eqref{solBridge}
(which in the conifold case represent the solution in \cite{Bridgeland:2019fbi}) can also be written as
\be
\begin{split}
\cY_i\Br = &\, \sum_{\beta\in B^+} n_\beta \,\beta_i \(\cY^{(1,+)}_\beta-\cY^{(1,-)}_{-\beta}\),
\\
\cY_0\Br = &\, -\chi_X\cYci{0}+\sum_{\beta\in B^+} n_\beta\(\cY^{(2,+)}_\beta+\cY^{(2,-)}_{-\beta}\).
\end{split}
\label{cY-Br}
\ee
As a sanity check on this claim, it is instructive to inspect the behaviour of the two sides
of these relations near $\varphi=\hat w$.
On the right, it can be extracted from \eqref{rescY} and \eqref{deffunJ-an},
while on the left it follows from \eqref{solBridge}
and the asymptotic expansions of $\cF_{0,0}=\log F$ and $\cF_{1,0}=\log G$ given in \cite[Prop. 4.6]{Bridgeland:2017vbr}.
As a result, on both sides one finds the same singular behavior\footnote{This also demonstrates that,
contrary to the claim in \cite{Bridgeland:2019fbi}, $\cX_\gamma\Br$ are not meromorphic functions.
It is possible that this fact was overlooked because for $\vph=0$,
the case where \cite[Prop. 4.6]{Bridgeland:2017vbr} has been derived,
the singularity lies at $w/t=0$, which is outside of the domain of consideration, whereas for non-vanishing $\vph$
it moves inside the complex plane.}
\be
\begin{split}
\cY_i\Br\under{\sim}{\vph\to \hw}  &\, \sum_{\beta\in B^+} n_\beta \,\beta_i \,\frac{\Li_2\(e^{2\pi\I(\thb-\hvb)}\)}{2\pi\I (\vph-\hw)}\, ,
\\
\cY_0\Br\under{\sim}{\vph\to \hw} &\, \sum_{\beta\in B^+\cup\{0\}} n_\beta\, \frac{\Li_3\(e^{2\pi\I(\thb-\hvb)}\)}{4\pi^2 (\vph-\hw)^2}\, .
\end{split}
\label{cY-Br-poles}
\ee
In fact, for $\cY_i\Br$ the equivalence of the formulae \eqref{solBridge} and \eqref{cY-Br}
is a consequence of an integral representation of the Faddeev quantum dilogarithm due to Woronowicz
(see \eqref{WoroF}). For $\cY_0\Br$, we do not have a similar analytic proof. Instead, we have checked
the agreement of \eqref{solBridge} and \eqref{cY-Br} numerically for random values of the arguments.
This agreement implies a new integral representation of the triple sine function $S_3(z|\omega_1,\omega_1,\omega_2)$
with two identical arguments.
Furthermore, equating the two representations for the $\tau$ function, \eqref{taufun-Bridge} and the one following
from results of \S\ref{subsec-taufunnew} with $\eps_\beta=\sgn\Im ( \vbt/w)$, one obtains
yet another representation for $S_3$, and it is not obvious at all
 how this second integral representation is related to the first one (although both were checked numerically).
We state both representations as Conjecture \ref{conj-sine} in the end of Appendix \ref{ap-Br}.
Thus, assuming that this conjecture holds true, we conclude that Bridgeland's solution
follows from our formalism and has exactly
the same status as other solutions considered so far.

The identification of the solution from \cite{Bridgeland:2019fbi} among those of \S\ref{Dc-new} also allows us to identify the one
corresponding to the solution of \cite{Bridgeland:2017vbr} at vanishing fiber coordinates.
The point is that the solution of \cite{Bridgeland:2019fbi} does {\it not} reduce to the one of \cite{Bridgeland:2017vbr}
when the fiber coordinates are set to zero.
And it is easy to check that the difference between them is exactly the same as the difference
between \eqref{cY-Br} and the solution corresponding to the principle value prescription.
Thus, it is the latter that reproduces the standard solution at vanishing fiber coordinates.
This is also consistent with the observation made at the end of \S\ref{subsec-strategy} around \eqref{evalS0}
and confirms  that the solution with $\eps=0$, obtained by applying
the principle value prescription to the integral \eqref{possibleInt}, is the preferred one.

\section{Relation to topological string partition function}
\label{sec_top}

In this section, we discuss the relation between the $\tau$ function and the topological string
partition function $F_{\rm top}$.  This relation was first noted for the resolved conifold in
\cite{Bridgeland:2017vbr}, by computing the asymptotic expansion of $\log \tau(z,0,t)$ near $t=0$
for vanishing fiber coordinates, and observing that it coincides with the genus expansion of
$F_{\rm top}(z,\lambda)$ in powers of the topological string coupling $\lambda$, up to the genus 0 contribution.
While the mathematical or physical origin of this coincidence remains obscure,
we shall see that it extends to arbitrary small crepant resolutions.
Furthermore, we observe that  the full genus zero contribution is captured by the Plebanski potential,
which adds another layer to the mystery.

\subsection{Brief review of the topological string partition function}

The topological string partition function was defined in the seminal work \cite{Bershadsky:1993cx},
as the generating series $\cF_{\rm top}(z,\bar z,\lambda) = \sum_{g\geq 0} \lambda^{2g-2} \cF_g(z,\bar z)$
of the genus $g$ vacuum amplitudes  in A-twisted topological string theory on a Calabi-Yau threefold $X$.
The generating series has zero radius of convergence, but it is expected to be the asymptotic expansion
near $\lambda=0$ of a globally well-defined function of the K\"ahler moduli $z^i$ ($ i=1,\dots,b_2(X)$)
and of the topological string coupling $\lambda$. While  $\cF_g(z,\bar z)$ are not holomorphic in
the K\"ahler moduli,
they satisfy holomorphic anomaly equations, which allow to determine them from their holomorphic limit
$F_g(z)$, obtained by sending $\bar z^i\to \infty$ while keeping $z^i$ fixed. The holomorphic
topological string partition function is then defined by its asymptotic series
$F_{\rm hol}(z,\lambda)=\sum_{g\geq 0} \lambda^{2g-2} F_g(z)$, although it is no longer
globally defined on K\"ahler moduli space but only near the large volume limit.

In~\cite{Gopakumar:1998ii,Katz:1999xq}, it was argued that $F_{\rm hol}(z,\lambda)$
is entirely determined by the Gopakumar-Vafa invariants $n_\beta^g\in\IZ$, via \be
\begin{split}
\label{gvbps}
e^{F_{\rm hol}(z,\lambda)} =&\, e^{F_{\rm pol}(z,\lambda)}   [M(e^{-\lambda})]^{\chi_X/2} \,
\prod_{\beta\in H_2^+(X)} \prod_{k>0} \left(1-e^{-k\lambda+2\pi\I z_\beta}\right)^{k n_\beta^0}
\\
&\,\times \prod_{\beta\in H_2^+(X)} \prod_{g>0}
\prod_{\ell=0}^{2g-2}
\left(1-e^{-(g-\ell-1)\lambda+2\pi\I z_\beta}
\right)^{(-1)^{g+\ell} {\scriptsize \begin{pmatrix} 2g-2 \\ \ell \end{pmatrix}}
n_\beta^g}.
\end{split}
\ee
Here $F_{\rm pol}$ (known as the polar part) is given by
\be
\label{Fpol}
F_{\rm pol}(z,\lambda)=-\frac{(2\pi \I)^3}{6\lambda^2}\, \kappa_{ijk} z^i z^j z^k
-\frac{\pi\I}{12} \,c_{i} z^i,
\ee
where $z^i=\int_{\gamma^i} J$ are the K\"ahler moduli obtained by integrating the K\"ahler form $J$ on
a basis $\{\gamma^i\}_{i=1}^{b_2(X)}$ of $H_2(X,\IZ)$, $\kappa_{ijk}=\gamma_i\cap\gamma_j\cap \gamma_k$
the triple intersection product on the dual basis
$\{\gamma_i\}$  of $H_4(X,\IZ)$, $c_i=\int_{\gamma_i} c_2(TX) $,
and  $z_\beta=\int_\beta J$ for $\beta\in H_2(X,\IZ)$. The second factor involves a power of
the Mac-Mahon function $M(\lambda)=\prod_{n\geq 1}(1-e^{-n\lambda})^{-n}$, and can be combined with the
third factor at the expense of defining $n_\beta^0=-\chi_X/2$ for $\beta=0$ as in \eqref{solBridge}.
The asymptotic expansion of the Mac-Mahon function near $\lambda=0$
was computed in \cite[\S E]{Dabholkar:2005dt}\footnote{See also  \cite[\S 4.3]{Pioline:2006ni}
and \cite{Koshkin:2007mz} for a streamlined computation
using Mellin transform. In comparing these references it is useful to note that
$\frac{\zeta'(2)}{2\pi^2} -\frac{1}{12} ( \gamma_E +\log 2\pi )  = \zeta'(-1) - \frac1{12}$. }
\be
\log M(\lambda) =
\frac{\zeta(3)}{\lambda^2} +\frac{1}{12} \,\log\frac{\lambda}{2\pi} + \frac{\zeta'(2)}{2\pi^2}-\frac{\gamma_E}{12}
+ \sum_{n\geq 2}\frac{  B_{2n} B_{2n-2}  \lambda^{2n-2}  }{2n(2n-2) (2n-2)!}\, .
\ee
Upon setting
\be
\tilde F_{\rm pol}(z,\lambda) = F_{\rm pol}(z,\lambda)
+ \frac{\chi_X}{2} \[ \frac{1}{12}\, \log\frac{\lambda}{2\pi} + \frac{\zeta'(2)}{2\pi^2}-\frac{\gamma_E}{12} \],
\ee
and taking the logarithm in \eqref{gvbps}, we find that the holomorphic topological string partition function is given by
\bea
F_{\rm hol}(z,\lambda) = \tilde F_{\rm pol}(z,\lambda)
- \sum_{g=0}^{\infty} \sum_{\beta\in H_2^+(X) \cup \{0\}} n_\beta^g  \sum_{m=1}^{\infty}
\frac{1}{m} \left( 2\sinh \frac{m\lambda}{2} \right)^{2g-2} e^{2\pi\I m z_\beta}.
\eea
When the  only non-zero GV invariants $n_\beta^g$ are those with $g=0$
(in which case we drop the upper index as in the rest of this article),
which is the case for small crepant resolutions, the asymptotic expansion
for small $\lambda$ can be obtained by Taylor expanding
\be
\label{bernii}
\frac{1}{4\sinh^2(x/2)} = \frac{1}{x^2} - \sum_{n\geq 1}
\frac{ B_{2n}\, x^{2n-2}}{2n(2n-2)! }
\ee
and using $\zeta(3-2n) = - \frac{B_{2n-2}}{2n-2}$. This gives
\bea
F_{\rm hol}(z,\lambda) &=& \tilde F_{\rm pol}(z,\lambda)
- \sum_{\beta\in H_2^+(X) \cup \{0\}} n_\beta  \sum_{m=1}^{\infty}
\left[
\frac{1}{m^3\lambda^2}- \sum_{n\geq 1}
\frac{m^{2n-3} B_{2n}}{2n(2n-2)! }\, \lambda^{2n-2}
\right] e^{2\pi\I m z_\beta}
\nn\\
&=&\tilde F_{\rm pol}(z,\lambda) +\frac{\chi_X}{2} \left[
\frac{ \zeta(3)}{\lambda^2}
+ \sum_{n\geq 2} \frac{B_{2n} B_{2n-2}  \lambda^{2n-2}  }{2n(2n-2) (2n-2)! }\right]
\nn\\
&&- \sum_{\beta\in H_2^+(X)} n_\beta
\left[
\frac{\Li_3( e^{2\pi\I z_\beta})}{\lambda^2}- \sum_{n\geq 1}
\frac{ B_{2n}  \lambda^{2n-2} }{2n(2n-2)!}\,  \Li_{3-2n}( e^{2\pi\I z_\beta})
\right].
\label{Ftopexp}
\eea
Recalling that  $B_{2n}= (-1)^{n-1}\vert B_{2n}\vert$, the constant map contribution on the second line  is
proportional to the intersection product on the moduli space $\cM_g$ of curves of genus $g=n$ \cite{Faber:1998gsw}
\be
 \int_{\cM_g} c_{g-1}^3
= \frac{|B_{2g} B_{2g-2}|}{2g (2g-2) (2g-2)!} \, ,
\ee
while the coefficient of the non-constant maps on the third line is the Euler number \cite{harer1986euler},
\be
\chi(\cM_g) = \frac{|B_{2g}|}{2g(2g-2)!}\, .
\ee

\subsection{Comparison with the $\tau$ function}

The asymptotic expansion of the function $\tau(z,\theta,t)$ for vanishing fiber coordinates is easily
obtained from  \eqref{reztau-pv} by combining the contributions of opposite vectors
$\pm \beta$ in $B$ and using Proposition \ref{porp-expand} in Appendix \ref{ap-new}. Upon using
Euler's relation
\be
\zeta(2n) = \frac{(-1)^{n+1}\, (2\pi)^{2n}}{2\, (2n)!} B_{2n},
\ee
and setting $k=n-1$, we readily obtain
\bea
\log\tau(z,0,t) &=&
\sum_{\beta\in B^+} n_\beta\[
\sum_{n\geq 1}
\frac{ B_{2n} }{2n(2n-2)!}\,  \Li_{3-2n}( e^{2\pi\I v_\beta/w} ) \,
\left( \frac{2\pi\I t}{w} \right)^{2n-2}
+\frac{1}{12}\( \frac{\pi\I v_\beta}{w} +\log(2\pi) \)
\]
\nn\\
&&
+\frac{\chi_X}{2}
\sum_{n\geq 2} \frac{B_{2n} B_{2n-2}  }{2n(2n-2) (2n-2)! }
\left( \frac{2\pi\I t}{w} \right)^{2n-2} -\frac{\chi_X}{24}\, \log \frac{w}{t}\, ,
\label{logtauexp}
\eea
where the last term in the first line originates from \eqref{In0}. Dropping irrelevant constant terms
and identifying the topological string coupling and K\"ahler moduli as\footnote{Note that
the condition $\beta\in B^+$ is consistent with the K\"ahler cone condition $\Im z_\beta>0$.}
\be
\lambda= \frac{2\pi\I t}{w},
\qquad
z_\beta=\frac{v_\beta}{w},
\label{ident-zlam}
\ee
we see that \eqref{logtauexp}
coincides with \eqref{Ftopexp} with genus zero (proportional to $1/\lambda^2$) terms subtracted.
The identification of the genus-1 term, however, requires
\be
\label{linpred}
c_i = -  \sum_{\beta\in B^+} n_\beta \beta_i.
\ee
In principle, one could check this prediction by computing the coefficients $c_i=\int_{\gamma_i} c_2(TX)$,
however the integrals are ill-defined since the  divisors $\gamma_i$ are non-compact,
and it is unclear how to regularize it. Nonetheless, assuming that there exists a choice of
regularization such that \eqref{linpred} holds true, we conclude that
the logarithm of the $\tau$ function for vanishing fiber coordinates provides a possible non-perturbative
completion of the holomorphic topological string partition function $F_{\rm top}(z,\lambda)$.
Clearly, it would be interesting to
have a deeper understanding of the coincidence
between these two asymptotic series.

Note that the $\tau$ function considered here corresponds to the principle value prescription of \S\ref{sec_pass2}.
If we expanded instead the $\tau$ function corresponding to a different solution, we would get a few additional terms.
For example, parametrizing the ambiguity by  $(\eps_+,\eps_-,\eps_0)$ as in \S\ref{subsec-compare},
one finds that these terms are given by
\be
-\eps_0\,\frac{ \chi_X\zeta(3)}{4\pi^2}\, \frac{t}{w}
+\sum_{\beta\in B^+} n_\beta\( (\eps_+ +\eps_-)\,\frac{\zeta(3)}{4\pi^2}\, \frac{t}{w} -(\eps_+ -\eps_-)\,\frac{\pi\I\vbt}{24 w}\).
\label{deltau}
\ee
The first two terms scale as $\lambda$ and are absent in the expansion of the topological string partition function.
This fact can be used as an additional argument for setting $\eps_0=\eps_++\eps_-=0$.
If \eqref{linpred} indeed holds,
then the last term in \eqref{deltau} should also vanish.
This would imply that all $\eps$'s vanish
leaving us again with the solution corresponding to the principle value prescription.

\subsection{Plebanski potential and holomorphic prepotential}

In \cite{Bridgeland:2019fbi} it was shown that for uncoupled BPS structures, there exists a locally
defined function $\cF(z)$, called prepotential,
related to the Plebanski potential $W$ by  the following equations,
\be
\frac{\p^3 \cF}{\p z^a\p z^b\p z^c}=\left.\frac{\p^3 W}{\p \theta^a\p\theta^b\p\theta^c}\right|_{\theta=0},
\qquad \forall a,b,c=1,\dots, 2N.
\label{defprep}
\ee
For the conifold case, the prepotential was calculated in \cite{Bridgeland:2019fbi} using the Plebanski potential
given by \eqref{WBr}, and it was observed that under the identification \eqref{ident-zlam},
it captures the world-sheet instanton part of the genus zero topological string amplitude $F_0(z)$
(which also coincides with the holomorphic prepotential in $\cN=2$ supergravity).
Since the Plebanski potential \eqref{resW-pv} for our preferred solution differs from \eqref{WBr},
it is natural to ask how the prepotential
is affected. A simple calculation gives for our solution\footnote{Equation \eqref{defprep} defines the prepotential
only up to quadratic terms in the moduli; we added the second term by hand so as to reach
a perfect match with $F_0$.}
\be
\cF=(2\pi\I)^2\sum_{\beta\in B^+}n_\beta\, \frac{\vbt^3}{6w}-\frac{\chi_X\zeta(3)}{4\pi\I}\, w^2
+\frac{w^2}{2\pi\I}\sum_{\beta\in B^+}n_\beta\,\Li_3\(e^{2\pi\I \vbt/w}\).
\ee
Using the identifications \eqref{ident-zlam} and comparing with the genus zero part of $F_{\rm hol}$ \eqref{Ftopexp},
one finds that
\be
\cF=-2\pi\I(t/\lambda)^2 \,F_0,
\ee
provided the intersection numbers are given by
\be
\label{kappapred}
\kappa_{ijk}=\sum_{\beta\in B^+} n_\beta \, \beta_i\beta_j\beta_k.
\ee
Like the coefficients $c_i$, the intersection numbers $\kappa_{ijk}=\gamma_i\cap\gamma_j\cap\gamma_k$ are {\it a priori}
ill-defined since the divisors $\gamma_i$ are non-compact.  Assuming that there is a choice of
regularization such that \eqref{kappapred} holds true, it therefore appears that the {\it full}
all-genus topological string partition function can also be extracted from the functions $\tau$ and $W$,
which are in turn determined by the BPS invariants. Unfortunately, the underlying reason for these coincidences
remains mysterious to us.

\appendix

\section{Special functions and useful identities}
\label{ap-functions}

In this appendix we collect various useful definitions, properties and identities (some of which were
already stated in \cite[\S B]{Alexandrov:2021wxu}).

\subsubsection*{Bernoulli polynomials}

The Bernoulli polynomials have the following generating function
\be
\sum_{n=0}^\infty  \frac{x^n}{n!}\, B_n(\xi)= \frac{x\, e^{\xi x}}{e^x-1}.
\label{genBn}
\ee
They have the following symmetry property
\be
B_n(1-x)=(-1)^n B_n(x)
\label{symBn}
\ee
and at $x=0$ they reduce to Bernoulli numbers $B_n$.
Importantly, the Bernoulli polynomials arise in the inversion formula for polylogarithms: namely,
for any $n\geq 2$ and $x\in \IC$, or $n\geq 0$ and $x\in\IC\backslash\IZ$,
\be
\Li_n(e^{2\pi\I x})+(-1)^n\Li_n(e^{-2\pi\I x})=-\frac{(2\pi\I)^n}{n!}\, B_n([x]),
\label{Li-ident}
\ee
where we use the principal branch definition of $\Li_s(z)$, and define
\be
[x]=
\begin{cases}
x-\fl{\Re x},  & \quad \mbox{if } \Im x\ge 0,
\\
x+\fl{-\Re x}+1,  & \quad \mbox{if } \Im x< 0.
\end{cases}
\label{defbr}
\ee
Note that for $\Im x\ne 0$ or $x\notin \IZ$, the bracket satisfies $[-x]=1-[x]$, consistently
with \eqref{Li-ident} and \eqref{symBn}. For integer $n<0$, we have instead
\be
\Li_n(e^{2\pi\I x})+(-1)^n\Li_n(e^{-2\pi\I x})= 0
\label{Li-ident2}
\ee
for all $x\in\IC\backslash\IZ$.

\subsubsection*{Useful identities}

For $d\in\IZ$, $0<\Re z<\Re \omega_1$, $\Im(z/\omega_1)>0$ and the contour $C$ going along the real axis but avoiding the origin from above,
one has \cite[Eqs.(36),(38)]{Bridgeland:2017vbr}
\begin{subequations}
\bea
\int_C \de s\,\frac{e^{zs}\, s^{-d}}{e^{\omega_1s}-1}&=&\(\frac{\omega_1}{2\pi\I}\)^{d-1}\Li_d\(e^{2\pi\I z/\omega_1}\),
\label{evalI-C1}
\\
\int_C \de s\,\frac{e^{(z+\omega_1)s}\, s^{1-d}}{(e^{\omega_1s}-1)^2}&=&
-\frac{\de}{\de\omega_1}\[\(\frac{\omega_1}{2\pi\I}\)^{d-1}\Li_d\(e^{2\pi\I z/\omega_1}\)\],
\label{evalI-C2}\\
\int_C \de s\,\frac{\(e^{\omega_1s}+1\)e^{(z+\omega_1)s}\, s^{2-d}}{(e^{\omega_1s}-1)^3}&=&
\frac{\de^2}{\de\omega_1^2}\[\(\frac{\omega_1}{2\pi\I}\)^{d-1}\Li_d\(e^{2\pi\I z/\omega_1}\)\].
\label{evalI-C3}
\eea
\label{evalI-C}
\end{subequations}
For $\alpha\in\IC\backslash\IZ$, one has \cite[1.445.6]{GradRyzh}
\bea
\sum_{k=1}^\infty \frac{\cos(2\pi kx)}{k^2-\alpha^2}&=&\frac{1}{2\alpha^2}-
\frac{\pi}{2\alpha}\, \frac{\cos(2\pi \alpha(\hf-[x]))}{\sin(\pi\alpha)}\, .
\label{sumcos0}
\eea
\label{sumsincos}

\subsubsection*{Generalized Gamma and Barnes functions}

The generalized Gamma function $\Lambda(z,\eta)$ and generalized Barnes functions $\Upsilon(z,\eta)$
are both functions on $\IC^\times\times\IC$ defined by \cite{Barbieri:2018swu,Alexandrov:2021wxu}
\bea
\Lambda(z,\eta)& =&\frac{e^z\,\Gamma(z+\eta)}{\sqrt{2\pi} z^{z+\eta-1/2}}\, .
\label{defLambda}
\\
\Upsilon(z,\eta) &=& \frac{e^{\frac34z^2- \zeta'(-1)}\, G(z+\eta+1)  }
{(2\pi)^{z/2} \,z^{\frac12 z^2} \bigl[\Gamma(z+\eta) \bigr]^\eta}\, ,
\label{defUpsilon}
\eea
where $G(z)$ is the Barnes function (see e.g. \cite{Vigneras1979}). The two functions are related to each other through
\be
\label{dUps}
\frac{\p}{\p z}  \log \Upsilon(z,\eta) = z \frac{\p}{\p z}  \log \Lambda(z,\eta).
\ee
They have the following integral representations analogous to the first Binet formula
for Gamma function and valid for $\Re z>0$, $\Re (z+\eta)>0$ \cite[\S B]{Alexandrov:2021wxu}
\be
\log \Lambda(z,\eta)=
\int_0^\infty \frac{\de s}{s} \(\eta-\hf-\frac{1}{s}+\frac{e^{(1-\eta)s}}{e^s-1} \) e^{-z s}\, ,
\label{Lam-Binet}
\ee
\be
\log\Upsilon(z,\eta) =
\int_0^\infty \frac{\de s}{s} \( \frac{1}{s^2} - \frac12 \,B_2(\eta)
- \frac{\eta(e^s-1)+1}{(e^s-1)^2}\,e^{(1-\eta)s} \)  e^{-z s} -\frac{1}{2} \,B_2(\eta) \log z.
\label{BinetUps1}
\ee

\section{Special functions relevant for Bridgeland's solution}
\label{ap-Br}

In this appendix we define a class of functions which appear in the solution discussed in \S\ref{sec_pass1}.
We also provide their relations to multiple sine functions, quantum dilogarithm
and present new integral identities for some of these functions.

We define the following set of functions
\be
\cF_{n,m}(z|\omega_1,\omega_2)=(-1)^{n+m}\int_C \frac{\de s}{s}\, e^{z s}\, \Li_{-n}(e^{-\omega_1 s})\, \Li_{-m}(e^{-\omega_2 s}),
\label{deffunF}
\ee
where the contour $C$ follows the real axis from $-\infty$ to $\infty$ avoiding the origin from above.
We consider these functions for $n,m$ non-negative integers, in which case the integral is convergent provided
$\Re(\omega_1),\Re(\omega_2)>0$ and $0<\Re(z)<\Re(\omega_1+\omega_2)$.
For reference, we list
\be
\Li_0(x)=\frac{x}{1-x}\, ,
\qquad
\Li_{-1}(x)=\frac{x}{(1-x)^2}\, ,
\qquad
\Li_{-2}(x)=\frac{x(1+x)}{(1-x)^3}\, .
\ee
In the special cases $(n,m)=(0,0)$ and $(1,0)$, one reproduces the functions $\log F(z|\omega_1,\omega_2)$
and $\log G(z|\omega_1,\omega_2)$, respectively, introduced in \cite[\S 4]{Bridgeland:2017vbr}
in terms of the multiple sine functions $S_r(z|\omega_1,\dots,\omega_r)$,
\bea
e^{\cF_{0,0}(z|\omega_1,\omega_2)} &=& F(z|\omega_1,\omega_2):= e^{-\frac{\pi\I}{2}\, B_{2,2}(z|\omega_1,\omega_2)}S_2(z|\omega_1,\omega_2),
\label{defFfun}
\\
e^{\cF_{1,0}(z|\omega_1,\omega_2)}&=& G(z|\omega_1,\omega_2):= e^{\frac{\pi\I}{6}\, B_{3,3}(z+\omega_1|\omega_1,\omega_1,\omega_2)}S_3(z+\omega_1|\omega_1,\omega_1,\omega_2).
\label{defGfun}
\eea
Here, the  multiple Bernoulli   polynomials $B_{r,m}$ are defined through the generating function generalizing \eqref{genBn},
\be
\sum_{n=0}^\infty  \frac{x^n}{n!}\, B_{r,n}(\xi|\omega_1,\dots,\omega_r)=
\frac{x^r\, e^{\xi x}}{\prod_{j=1}^r (e^{\omega_j x}-1)}\, .
\label{genBn-r}
\ee
In particular, one has
\bea
B_{2,2}(z|\omega_1,\omega_2)&=&\frac{z^2}{\omega_1\omega_2}-\(\frac{1}{\omega_1}+\frac{1}{\omega_2}\)z
+\frac16\(\frac{\omega_2}{\omega_1}+\frac{\omega_1}{\omega_2}\)+\hf\, ,
\label{B22}
\\
B_{3,3}(z|\omega_1,\omega_2,\omega_3)&=&\frac{z^3}{\omega_1\omega_2\omega_3}
-\frac{3z^2(\omega_1+\omega_2+\omega_3)}{2\omega_1\omega_2\omega_3}
+\frac{\omega_1^2+\omega_2^2+\omega_3^2+3(\omega_1\omega_2+\omega_1\omega_3+\omega_2\omega_3)}{2\omega_1\omega_2\omega_3} z\nn\\
&&- \frac{(\omega_1+\omega_2+\omega_3)(\omega_1\omega_2+\omega_1\omega_3+\omega_2\omega_3)}{4 \omega_1\omega_2\omega_3}\, .
\nn
%\label{B33}
\eea
We refer to \cite{Kurokawa2003MultipleSF} for the definition and main properties of
the multiple sine functions, and to \cite{Narukawa2003TheMP} for an  integral representation
established for these functions, which implies the relations \eqref{defFfun}, \eqref{defGfun}.
Note that all these functions are invariant under a simultaneous rescaling of the arguments
$z$ and $\omega_i$.

\subsubsection*{Double sine, Faddeev's quantum dilogarithm and Woronowicz integral}

It  will be useful to express the function $F(z|\omega_1,\omega_2)$ in terms of Faddeev's quantum dilogarithm \cite{Faddeev:1995nb},
which will allow us to derive a new integral representation for  this function.
The quantum dilogarithm is defined by
\be
\Phi_b(x) = \exp\left[ \frac14 \int_{C} \frac{e^{-2\I x t}}{\sinh(b\, t) \sinh(t/b)} \frac{\de t}{t} \right].
\ee
Comparing this integral with \eqref{deffunF} for $m=n=0$ where the integration variable $s$
is changed to $t=\frac{s}{2}\, \sqrt{\omega_1\omega_2}$, we get
\be
F(z|\omega_1,\omega_2) = \Phi_{\sqrt{\omega_1/\omega_2}}\left( \frac{\I}{\sqrt{\omega_1\omega_2}}
\(z-\frac{\omega_1+\omega_2}{2}\) \right) .
\ee

On the other hand, $\Phi_b$ admits a different integral representation \cite{Garoufalidis:2020pax}
originally due to Woronowicz \cite{woronowicz2000quantum},
\be
\label{WoroPhi}
\Phi_b(x) = \exp\left( \frac{\I}{2\pi}\, W_{1/b^2}(2\pi b x)\right),
\qquad
W_\theta(z) = \int_{\IR} \frac{ \log(1+e^{\theta\xi})}{1+e^{\xi-z}} \de \xi \, .
\ee
Upon folding the integral over $\IR$ onto the positive axis, one gets
\be
W_{\theta}(z) =-\theta\, \Li_2(-e^{z}) +
\int_0^{\infty} \log(1+e^{-\theta\xi})
\left( \frac{1}{1+e^{\xi-z}} + \frac{1}{1+e^{-\xi-z}} \right) \de\xi\, .
\ee
Setting $b^2=\omega_1/\omega_2$ and $\xi=2\pi s \,\omega_1/\omega_2$, one arrives at
the following integral representation,
\bea
\log F(z|\omega_1,\omega_2) &=&
\frac{\omega_2}{2\pi\I \omega_1} \Li_2\left( e^{\frac{2\pi\I}{\omega_2}(z-\frac{\omega_1}{2})}\right)
\\&&+
\frac{\I\omega_1}{\omega_2} \int_0^{\infty}
\left( \frac{ \log(1+e^{-2\pi s})}{1-e^{\frac{2\pi }{\omega_2}
(\omega_1 s- \I (z-\frac{\omega_1}{2}) )}}
+ \frac{ \log(1+e^{-2\pi s})}{1-e^{-\frac{2\pi}{\omega_2}
( \omega_1 s+ \I (z-\frac{\omega_1}{2}))}}
 \right) \de s \, .
\nn
\eea
In order to remove the half-period shift of $z$, we shift the integration variable $s\mapsto s\pm \I/2$ in
each of the two terms on the second line, which gives
\bea
\log F(z|\omega_1,\omega_2) &=&
\frac{\omega_2}{2\pi\I \omega_1} \Li_2\left( e^{\frac{2\pi\I}{\omega_2}(z-\frac{\omega_1}{2})}\right)
+
\sum_{\eps=\pm}\frac{\I\omega_1}{\omega_2} \int_{\eps \I /2}^{\infty}
\frac{\log(1-e^{-2\pi s})}{1-e^{\frac{2\pi }{\omega_2}(\eps \omega_1 s - \I z) }}
\,\de s.
\eea
Further decomposing $ \int_{\eps \I /2}^\infty =  \int_{0}^\infty  - \int_0^{\eps \I /2}$ and
sending $s\mapsto -s$ in the integral from $0$ to $-\I/2$, this can be rewritten as
\bea
\log F(z|\omega_1,\omega_2) &=&\frac{\omega_2}{2\pi\I \omega_1} \Li_2\left( e^{\frac{2\pi\I}{\omega_2}(z-\frac{\omega_1}{2})}\right)
+\frac{2\pi \I\omega_1}{\omega_2} \int_0^{\I /2}
\frac{\(s-\frac{\I}{2}\)\de s}{1-e^{\frac{2\pi }{\omega_2}(\omega_1 s- \I z) }}
\\
&&+\frac{\I\omega_1}{\omega_2}\int_0^\infty\de s\, \log\(1-e^{-2\pi s}\)\[\frac{1}
{1-e^{-\frac{2\pi}{\omega_2}(\omega_1 s+\I z)}}
+\frac{1}{1-e^{\frac{2\pi}{\omega_2}(\omega_1 s-\I z)}}\] .
\nn
\eea
Finally, using the identity (easily established by integrating by parts)
\be
\int_0^x \frac{(s-x)\de s}{1-e^{as-y}}=-\frac{x}{a}\, \log\(1-e^y\)+\frac{1}{a^2}\, \(\Li_2\(e^{y-ax}\)-\Li_2\(e^y\)\),
\ee
one obtains
\bea
\label{WoroF}
\log F(z|\omega_1,\omega_2) &=& \hf\, \log\(1-e^{2\pi\I z/\omega_2}\)
+\frac{\omega_2}{2\pi\I \omega_1}\, \Li_2\(e^{2\pi\I z/\omega_2}\)
\\
&&+\frac{\I\omega_1}{\omega_2}\int_0^\infty\de s\, \log\(1-e^{-2\pi s}\)\[\frac{1}
{1-e^{-\frac{2\pi}{\omega_2}(\omega_1 s+\I z)}}
+\frac{1}{1-e^{\frac{2\pi}{\omega_2}(\omega_1 s-\I z)}}\].
\nn
\eea
This integral representation plays a key role in relating two types of solution of the RH problem as discussed in \S\ref{subsec-compare}
because the last line can be equivalently rewritten in terms of the functions \eqref{deffunJ} as
\be
\frac{\I\omega_2}{2\omega_1}\(\cJ_{1,0}\(-\frac{\omega_1}{\omega_2},-\frac{z}{\omega_2},0,0\)-
\cJ_{1,0}\(-\frac{\omega_1}{\omega_2},\frac{z}{\omega_2},0,0\)\)-\frac{\pi\I\omega_2}{12\omega_1}\, .
\ee

\subsubsection*{Woronowicz-type integral representations for the triple sine function}

By comparing Bridgeland's expressions for $\cY_0$ and $\log\tau$ with our solution,
we arrive at the following conjecture, supported
by numerical checks for random values of the arguments:

\begin{conj}
\label{conj-sine}
One has the following integral representations for the triple sine function
for two coinciding periods,
\bea
\log S_3  (z|\omega_1,\omega_1,\omega_2) &=&
 -\frac{\pi\I}{6}\, B_{3,3}(z|\omega_1,\omega_1,\omega_2)
+\hf\(1-\frac{z}{\omega_1}\) \log\(1-e^{2\pi\I z/\omega_1}\)
\nn\\
&&
-\frac{1}{2\pi\I}\(\frac{z-\omega_1}{ \omega_2}+\hf\)\, \Li_2\(e^{2\pi\I z/\omega_1}\)
-\frac{\omega_1}{2\pi^2\omega_2}\, \Li_3\(e^{2\pi\I z/\omega_1}\)
\label{WoroG1}\\
&&
+\frac{\omega_2^2}{\omega_1^2}\int_0^\infty\de s\, \log\(1-e^{-2\pi s}\)
\[\frac{s+\frac{\I(z-\omega_1)}{\omega_2}}{e^{-\frac{2\pi}{\omega_1}(\omega_2 s+\I z)}-1}
-\frac{s-\frac{\I(z-\omega_1)}{\omega_2}}{e^{\frac{2\pi}{\omega_1}(\omega_2 s-\I z)}-1}\]
\nn
\eea
and
\bea
\log S_3  (z|\omega_1,\omega_1,\omega_2) &=&
 -\frac{\pi\I}{6}\, B_{3,3}(z|\omega_1,\omega_1,\omega_2)
+\frac{\omega_2^2}{4\pi^2\omega_1^2}\, \Li_3\(e^{2\pi\I z/\omega_2}\)-\frac{1}{12}\, \log\(1-e^{2\pi\I(z-\omega_1)/\omega_2 }\)
\nn\\
&-&
\frac{\omega_1}{4\pi \omega_2}\int_0^\infty \de s\,
\frac{\bigl(\Li_2 \(e^{-2\pi s}\) -2\pi s\log\(1-e^{-2\pi s}\)\bigr)\, \sinh(2\pi s\, \omega_1/\omega_2) }
{\sinh\(\frac{\pi}{\omega_2}(\omega_1 (s-\I)+\I z)\)\sinh\(\frac{\pi}{\omega_2}(\omega_1 (s+\I)-\I z)\)}\, ,
\label{WoroG2}
% -\frac{\pi\I}{6}\, B_{3,3}(z|\omega_1,\omega_1,\omega_2)
%+\frac{\pi\I (z-\omega_1)\omega_2}{12\omega_1^2}
%+\hf\(1-\frac{z}{\omega_1}\) \log\(1-e^{2\pi\I z/\omega_1}\)
%\nn\\
%&-&
%\frac{1}{2\pi\I}\(\frac{z-\omega_1}{ \omega_2}+\hf\)\, \Li_2\(e^{2\pi\I z/\omega_1}\)
%-\frac{\omega_1}{2\pi^2\omega_2}\, \Li_3\(e^{2\pi\I z/\omega_1}\)
%\label{WoroG2} \\
%&-& \frac{\omega_2^2}{2\omega_1^2} \int_0^\infty\de s\,\log\(1-e^{-2\pi s}\)
%\frac{s\, \sinh(2\pi s\, \omega_2/\omega_1)+\frac{z-\omega_1}{\omega_2}\, \sin(2\pi z/\omega_1) }
%{\sinh\(\frac{\pi}{\omega_1}(\omega_2 s+\I z)\)\sinh\(\frac{\pi}{\omega_1}(\omega_2 s-\I z)\)}\,,
\eea
where the triple Bernoulli polynomial $B_{3,3}$ is given in \eqref{B22}.
\end{conj}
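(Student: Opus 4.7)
The plan is to reduce Conjecture \ref{conj-sine} to the already-proven Woronowicz-type identity \eqref{WoroF} for the double sine function, using as a bridge the elementary relation
\be
\frac{\p \cF_{1,0}}{\p z}(z|\omega_1,\omega_2) = \frac{\p \cF_{0,0}}{\p \omega_1}(z|\omega_1,\omega_2).
\ee
This relation is immediate from the contour-integral definition \eqref{deffunF}: differentiating under the integral sign, both sides yield the same integrand $-e^{zs}\,e^{-\omega_1 s}/(1-e^{-\omega_1 s})^2 \cdot e^{-\omega_2 s}/(1-e^{-\omega_2 s})$. Combined with the translation \eqref{defGfun} relating $\cF_{1,0}$ and $\log S_3$, this turns the evaluation of the triple sine into a single $\omega_1$-differentiation of \eqref{WoroF} followed by a $z$-integration.

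To establish \eqref{WoroG1}, I would differentiate the right-hand side of \eqref{WoroF} with respect to $\omega_1$ and integrate over $z$ from a convenient base point. The $\Li_2$ term of \eqref{WoroF} generates (via $\p_z \Li_{n+1}(e^{2\pi\I z/\omega}) = (2\pi\I/\omega)\Li_n(e^{2\pi\I z/\omega})$) both the $\Li_3$ contribution to \eqref{WoroG1} and, through the explicit $\omega_1$-derivative, a term proportional to $\Li_2/\omega_1^2$ that after $z$-integration yields the $(z-\omega_1)/(2\pi\I\omega_2)\cdot\Li_2$ piece. The $\omega_1$-derivative of the integrand of \eqref{WoroF} reshuffles the two denominator fractions into the antisymmetric combination displayed in the last line of \eqref{WoroG1}. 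The $z$-integration constant, together with the polynomial prefactor involving $B_{3,3}$, is then pinned down by matching the known behaviour of $\log S_3$ at the locus $z=\omega_1$ (where the shift in \eqref{defGfun} makes the triple sine regular) and by invoking the reflection identity $S_3(2\omega_1+\omega_2-z|\omega_1,\omega_1,\omega_2)\,S_3(z|\omega_1,\omega_1,\omega_2)=1$.

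For the second representation \eqref{WoroG2}, the plan is to transform \eqref{WoroG1} by integration by parts in the variable $s$, exploiting
\be
\frac{\de}{\de s}\bigl[s\,\Li_2(e^{-2\pi s})\bigr] = \Li_2(e^{-2\pi s}) - 2\pi s\,\log(1-e^{-2\pi s}),
\ee
which produces exactly the numerator appearing in the integrand of \eqref{WoroG2}. The two fractions inside the brackets of \eqref{WoroG1} are then to be combined into a single quotient whose denominator factorizes as $\sinh(\pi(\omega_1(s-\I)+\I z)/\omega_2)\sinh(\pi(\omega_1(s+\I)-\I z)/\omega_2)$ after multiplying numerator and denominator by suitable exponentials; the antisymmetry between the two fractions supplies the $\sinh(2\pi s\omega_1/\omega_2)$ factor. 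Boundary terms from the integration by parts must either vanish at $s=\infty$ or be absorbed into the elementary polynomial and $\Li_3$ pieces of \eqref{WoroG1}, producing the simpler polynomial prefactor of \eqref{WoroG2}. An alternative route is to rerun the Mellin-transform derivation of \eqref{WoroPhi} directly on the integrand of $\cF_{1,0}$, bypassing \eqref{WoroG1} altogether.

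The main technical obstacle will be the bookkeeping of the polynomial and boundary contributions: in particular, (i) tracking how the half-period shift $z\mapsto z+\omega_1$ from \eqref{defGfun} interacts with $B_{3,3}$, and (ii) confirming that the integration constant in $z$ produced by the first step is compatible with the prescribed polynomial prefactor of \eqref{WoroG1}, both require nontrivial identities among multiple Bernoulli polynomials. For the passage to \eqref{WoroG2}, the hardest step is justifying the recombination of the two denominators into a single $\sinh\cdot\sinh$ factor via shifts $s\mapsto s\pm\I/2$ analogous to those performed in the derivation of \eqref{WoroF}; the additional pole at $s=0$ generated by the integration by parts makes the contour deformations subtle and is likely to be the most delicate point of the argument.
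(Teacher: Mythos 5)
You should first be aware that the paper contains no proof of this statement: it is stated as Conjecture \ref{conj-sine} and is supported only by numerical checks at random values of the arguments. The authors explicitly say that, unlike the double-sine identity \eqref{WoroF} which does follow from Woronowicz' representation \eqref{WoroPhi}, they have no analytic proof of \eqref{WoroG1} (it encodes the unproven equality of \eqref{solBridge} and \eqref{cY-Br} for $\cY_0\Br$), and that it is ``not obvious at all'' how \eqref{WoroG1} and \eqref{WoroG2} are related to each other. So there is no paper proof to compare against; completing your programme would be a genuinely new result rather than a reproduction of the paper's argument.

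As it stands, however, your proposal does not establish either identity. Your bridging relation $\p_z\cF_{1,0}(z|\omega_1,\omega_2)=\p_{\omega_1}\cF_{0,0}(z|\omega_1,\omega_2)$ is indeed correct and immediate from \eqref{deffunF}, and the route to \eqref{WoroG1} via an $\omega_1$-derivative of \eqref{WoroF} followed by a $z$-integration is plausible (and consistent with the authors' own expectation that the conjecture can be proven ``along the same lines as \eqref{WoroF}''); but every hard step --- the differentiation and $z$-integration, the fixing of the $z$-independent integration constant, the interplay of the $\omega_1$-shift in \eqref{defGfun} with $B_{3,3}$ --- is deferred and flagged by you as an open obstacle, so this part is a plan, not a proof. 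The passage from \eqref{WoroG1} to \eqref{WoroG2} is moreover flawed as described: first, $\frac{\de}{\de s}\bigl(s\,\Li_2(e^{-2\pi s})\bigr)=\Li_2\bigl(e^{-2\pi s}\bigr)+2\pi s\log\bigl(1-e^{-2\pi s}\bigr)$, with a plus sign, so the numerator of \eqref{WoroG2} is \emph{not} this derivative (it equals $-s^2\,\p_s\bigl(\Li_2(e^{-2\pi s})/s\bigr)$, so the integration by parts must be set up differently); second, and more structurally, integration by parts in $s$ cannot change which period sits in the exponentials: the denominators in \eqref{WoroG1} involve $e^{\pm\frac{2\pi}{\omega_1}(\omega_2 s\mp\I z)}$, whereas the $\sinh$ factors in \eqref{WoroG2} involve $\frac{\pi}{\omega_2}\bigl(\omega_1(s\mp\I)\pm\I z\bigr)$, i.e. the roles of $\omega_1$ and $\omega_2$ are exchanged. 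Relating the two therefore requires something like a modular-type transformation (the analogue of the $b\to 1/b$ self-duality underlying \eqref{WoroPhi}), not a rearrangement of the same integrand, which is precisely why the authors found the relation between their two representations obscure. Your fallback suggestion of rerunning a Woronowicz/Mellin-type derivation directly on $\cF_{1,0}$ is the more promising route, but it too is left unexecuted.
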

\noindent
Note that the integrals in \eqref{WoroG1} and \eqref{WoroG2} can be expressed as linear combinations
of functions $\cJ_{m,n}$ and $\cI_n$ introduced in Appendix \ref{ap-new}.

%\bea
%\log S_2(z|\omega_1,\omega_2)&=&
%\frac{\pi\I}{2}\, B_{2,2}(z|\omega_1,\omega_2)-\frac{\pi\I\omega_2}{12\omega_1}+\hf\, \log\(1-e^{2\pi\I z/\omega_1}\)
%+\frac{\omega_1}{2\pi\I \omega_2}\, \Li_2\(e^{2\pi\I z/\omega_1}\)
%\nn\\
%&&
%+\frac{\omega_2}{2\omega_1}\, \sin(2\pi z/\omega_1) \int_0^\infty
%\frac{ \log\(1-e^{-2\pi s}\)\de s}{\sinh\(\frac{\pi}{\omega_1}(\omega_2 s+\I z)\)\sinh\(\frac{\pi}{\omega_1}(\omega_2 s-\I z)\)}\,,
%\label{intS2}
%\eea
%where the double Bernoulli polynomial $B_{2,2}$ is given in \eqref{B22}.
%\end{conj}
%\noindent
%The relation to the functions $\cJ_{m,n}$ is established by rewriting
%the last term in \eqref{intS2} as the difference
%$\frac{\I\omega_2}{2\omega_1}\(\cJ_{1,0}\(-\frac{\omega_1}{\omega_2},-\frac{z}{\omega_2},0,0\)-
%\cJ_{1,0}\(-\frac{\omega_1}{\omega_2},\frac{z}{\omega_2},0,0\)\)$.

\section{Special functions relevant for the new solution}
\label{ap-new}

In this appendix we define a class of functions which appear in the solution obtained in \S\ref{sec_pass2}.
We also describe their analytic properties, establish various identities between them and derive their asymptotic expansion.

\subsubsection*{Definition and analytic properties}

For $m$ positive integer and $n$ non-negative integer, we define
\be
\cJ_{m,n}(x,y,\xi,\eta)=(-1)^{n+m+1}\int_0^\infty \de s\, \frac{s^n \,
\Li_m\(e^{-2\pi(1- \xi/x)s+2\pi\I (\eta-\xi y/x) }\)}{\tanh(\pi(s-\I y)/x)}\, .
\label{deffunJ}
\ee
The integral converges provided $\Re(\xi/x)<1$ and $kx+\I y \notin\IR^+$ for any $k\in\IZ$.
We also introduce a slightly modified version of these functions which removes the pole at $y=0$ appearing for $\cJ_{m,0}$.
Namely, we define
\be
\tcJ_{m,n}(x,\xi)=(-1)^{n+m+1}\int_0^\infty \de s\,s^n\(
\frac{\Li_m\(e^{-2\pi(1- \xi/x)s}\)}{\tanh(\pi s/x)}
-\frac{x}{\pi s}\,\Li_m\(e^{-2\pi s}\)\) ,
\label{deffunJ0}
\ee
which converges for $\Re(\xi/x)<1$ and $\Re(x)\neq 0$.

The analytic continuation to the full complex $\xi$ plane
is easily obtained by rewriting the integrals as
\bea
\cJ_{m,n}(x,y,\xi,\eta)&=&\veps(-1)^{n+m} \[
2 \int_0^\infty \de s\, \frac{s^n \,\Li_m\(e^{-2\pi(1- \xi/x)s+2\pi\I( \eta-\xi y/x) }\)}{1-e^{2\pi\veps (s-\I y)/x}}
\right.
\nn\\
&&\left.\qquad
-\frac{n!}{(2\pi)^{n+1}}\, \frac{\Li_{m+n+1}\(e^{2\pi\I ( \eta-\xi y/x)}\)}{(1- \xi/x)^{n+1}}\] ,
\label{deffunJ-an}
\eea
\bea
\tcJ_{m,n}(x,\xi)&=&\veps(-1)^{n+m}\[2\int_0^\infty \de s\,s^n\(
\frac{\Li_m\(e^{-2\pi(1- \xi/x)s}\)}{1-e^{2\pi\veps s/x}}
-\frac{x}{\pi s}\,\Li_m\(e^{-2\pi s}\)\)
\right.
\nn\\
&&\left.\qquad
-\frac{n!}{(2\pi)^{n+1}}\, \frac{\zeta(m+n+1)}{(1- \xi/x)^{n+1}}\] ,
\label{deffuntJ-an}
\eea
where  we have set $\veps=\sgn(\Re x)$.
The integrals are now convergent for any $\xi$, whereas the pole at $\xi=x$ reflects the original obstruction.

The analytic structure in the variables $x$ and $y$ is more complicated.  As indicated above,  the integrand in \eqref{deffunJ}
has a series of poles at $s=\I (y+kx)$, $k\in\IZ$,
which can cross the integration contour as  $x$ and $y$ are varied
As a result, $\cJ_{m,n}$ is a multi-valued function: it jumps when the arguments $(x,y)$ cross one of the hypersurfaces in $\IC^2$
given by
\be
\Re (y+k x)=0, \qquad \Im(y+k x) <0.
\label{poles}
\ee
The jump is given by the following Proposition which follows from the residue evaluation
\begin{proposition}
\label{porp-jump}
Let $x_k,y_k$ satisfy \eqref{poles}. Then the difference between values of $\cJ_{m,n}$
evaluated from the two sides of the hypersurface with $\pm \Re(y+kx)>0$, respectively, is
\be
\Delta\cJ_{m,n}(x_k,y_k,\xi,\eta)=
2\I^{n-1} (-1)^{n+m} x_k(y_k+kx_k)^n \,\Li_m\(e^{2\pi\I(\eta -y_k++k(\xi-x_k))}\).
\nn
\ee
\end{proposition}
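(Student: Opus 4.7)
The plan is to identify the pole of the integrand of $\cJ_{m,n}$ that crosses the integration contour as $\Re(y+kx)$ changes sign, compute its residue, and apply the standard jump formula.

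First I would locate the singularities of the integrand in \eqref{deffunJ}. The polylogarithm factor $\Li_m\bigl(e^{-2\pi(1-\xi/x)s + 2\pi\I(\eta-\xi y/x)}\bigr)$ is entire in $s$ (for $m \geq 1$), so all singularities come from the denominator $\tanh(\pi(s-\I y)/x)$, which vanishes simply at $s_k = \I(y+kx)$ for each $k\in\IZ$. Since $\tanh'(\I k\pi) = 1$, one finds $\tanh(\pi(s-\I y)/x) \sim \pi(s-s_k)/x$ near $s = s_k$, so the residue of $1/\tanh(\pi(s-\I y)/x)$ at $s_k$ is $x/\pi$.

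Next I would observe that $s_k = \I(y+kx) = \I\Re(y+kx) - \Im(y+kx)$ lies on the positive real axis precisely when $\Re(y+kx)=0$ and $\Im(y+kx) < 0$, i.e.\ under the hypothesis \eqref{poles}. As $\Re(y+kx)$ varies through $0$ with $\Im(y+kx)$ fixed negative, the point $s_k$ crosses the integration contour $(0,\infty)$ vertically, moving from below when $\Re(y+kx)<0$ to above when $\Re(y+kx)>0$. By the standard contour-deformation argument, the two analytic continuations $\cJ_{m,n}^{\pm}$ (defined on $\pm\Re(y+kx)>0$) differ by $2\pi\I$ times the residue at $s_k$:
\begin{equation*}
\Delta\cJ_{m,n} \;=\; \cJ_{m,n}^{+}-\cJ_{m,n}^{-} \;=\; 2\pi\I\,\mathop{\mathrm{Res}}_{s=s_k}\Bigl[(-1)^{n+m+1}\frac{s^n\,\Li_m\bigl(\cdots\bigr)}{\tanh(\pi(s-\I y)/x)}\Bigr].
\end{equation*}

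Finally I would plug in $s_k = \I(y_k + kx_k)$, using $s_k^n = \I^n(y_k+kx_k)^n$, and simplify the exponent of the polylogarithm: evaluating $-2\pi(1-\xi/x)s_k + 2\pi\I(\eta - \xi y/x)$ at $s_k = \I(y+kx)$, the $\xi y/x$ terms cancel and the remainder collapses to $2\pi\I\bigl(\eta - y_k + k(\xi - x_k)\bigr)$. Multiplying the three pieces $2\pi\I \cdot (-1)^{n+m+1} \cdot (x_k/\pi) \cdot \I^n(y_k+kx_k)^n \cdot \Li_m(\cdots)$ and using $-\I^{n+1} = \I^{n-1}$ yields exactly the claimed formula. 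The only subtlety that needs care is the sign of the jump, which is fixed unambiguously by the fact that increasing $\Re(y+kx)$ through $0$ moves the pole upward across the contour (equivalent, up to sign, to deforming the contour clockwise around the pole). No deep ingredients are required, so I expect this to be essentially a direct residue computation.
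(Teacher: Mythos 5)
Your proposal is correct and is exactly the paper's argument: the paper simply states that the Proposition ``follows from the residue evaluation,'' and your computation of the pole location $s_k=\I(y+kx)$, the residue $x/\pi$ of $1/\tanh$, the collapse of the exponent to $2\pi\I(\eta-y+k(\xi-x))$, and the sign $+2\pi\I\,\mathrm{Res}$ from the pole moving upward across the contour all reproduce the stated jump. (Only a cosmetic caveat: $\Li_m$ is not literally entire, but it is analytic near the relevant pole $s_k$, so your argument is unaffected.)
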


A similar picture holds for $\tcJ_{m,n}$ with the difference that now all poles hit the integration contour
at the same hypersurface. It is described by
\begin{proposition}
\label{porp-jump0}
Let $x\in \I \IR^-$. Then the difference between values of $\tcJ_{m,n}$
evaluated from the two sides of this line with $\pm \Re x>0$, respectively, is
\be
\Delta\tcJ_{m,n}(x,\xi)=
2\I^{n-1} (-1)^{m+n}  x^{n+1}\sum_{k=1}^\infty k^n\Li_m\(e^{2\pi k(\xi-x)}\).
\nn
\ee
\end{proposition}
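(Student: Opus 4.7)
The plan is to adapt the residue-theoretic argument behind Proposition~\ref{porp-jump} to the degenerate situation where infinitely many poles of the integrand coalesce onto the contour simultaneously. First I would work with the analytically continued form \eqref{deffuntJ-an} of $\tcJ_{m,n}$ rather than \eqref{deffunJ0}, since in that form the integrand is regular at $s=0$ and the only candidate singularities that can cross the positive real axis are the simple poles of $1/(1-e^{2\pi\veps s/x})$ located at $s=\I k x$, $k\in\IZ_{\neq 0}$; the subtracted term $(x/\pi s)\Li_m(e^{-2\pi s})$ is entire at each of these points and plays no role. Writing $x=r\, e^{\I\theta}$, the pole $s=\I k x$ sits at $kr\, e^{\I(\theta+\pi/2)}$, which lies strictly in the upper half-plane for $\theta\in(-\pi/2,\pi/2)$ (i.e.\ $\Re x>0$) and strictly in the lower half-plane for $\theta\in(-3\pi/2,-\pi/2)$ ($\Re x<0$); exactly on $\I\IR^-$ every pole with $k\ge 1$ lands on the positive real axis at once.

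Next I would evaluate the jump as $2\pi\I$ times the sum of the enclosed residues, with the overall sign dictated by the standard identity that a contour passing below a pole minus the same contour passing above it equals $+2\pi\I\,\mathrm{Res}$. Using $\tanh(u+\I k\pi)=\tanh u$, one obtains $\tanh(\pi s/x)=(\pi/x)(s-\I k x)+O((s-\I k x)^3)$ near $s=\I k x$, so
\begin{equation*}
\operatorname*{Res}_{s=\I k x}\frac{s^n\,\Li_m\!\left(e^{-2\pi(1-\xi/x)s}\right)}{\tanh(\pi s/x)}
=\frac{\I^n k^n x^{n+1}}{\pi}\,\Li_m\!\left(e^{2\pi\I k(\xi-x)}\right).
\end{equation*}
Multiplying by the overall prefactor $(-1)^{n+m+1}$ from \eqref{deffunJ0} and by $2\pi\I$, and summing over $k\ge 1$, yields
\begin{equation*}
\Delta\tcJ_{m,n}(x,\xi)=2\I^{n-1}(-1)^{n+m}\,x^{n+1}\sum_{k=1}^{\infty}k^n\,\Li_m\!\left(e^{2\pi\I k(\xi-x)}\right),
\end{equation*}
which is the claim (specialising Proposition~\ref{porp-jump} to $\eta=y=0$ would predict exactly the same residue at each $k$, providing a direct sanity check).

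The main obstacle, and the only place where the argument differs non-trivially from Proposition~\ref{porp-jump}, is justifying the simultaneous contour deformation past all the poles and the interchange of residue sum with the limit. Fortunately, in a neighborhood of the wall $x\in\I\IR^-$ one has $\Im x<0$, so $|e^{2\pi\I k(\xi-x)}|=e^{-2\pi k\,\Im(x-\xi)}$ decays exponentially in $k$ uniformly for $\xi$ in compact sets of a suitable strip. This decay makes the sum of residues absolutely convergent and legitimizes, via dominated convergence, the one-step deformation of the contour past all the poles $\I k x$ at once. For values of $\xi$ outside this strip, both sides of the claimed formula define holomorphic functions of $\xi$ through \eqref{deffuntJ-an} on each connected component of the complement of the wall, so the identity extends by analytic continuation in $\xi$, completing the proof.
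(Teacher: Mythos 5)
Correct, and this is precisely the argument the paper intends (it only says the result ``follows from the residue evaluation''): the poles $s=\I k x$, $k\ge 1$, all cross the contour $\IR^+$ simultaneously as $x$ crosses $\I\IR^-$, each contributing $2\pi\I$ times the residue $\frac{\I^{n}k^{n}x^{n+1}}{\pi}\,\Li_m(e^{2\pi\I k(\xi-x)})$, and the residue sum converges exactly on the domain $\Re(\xi/x)<1$ where $\tcJ_{m,n}$ is defined (your intermediate modulus should read $e^{2\pi k\,\Im(x-\xi)}$ rather than $e^{-2\pi k\,\Im(x-\xi)}$, but the exponential-decay conclusion is unaffected). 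The one discrepancy is the exponent in the final formula: your $e^{2\pi\I k(\xi-x)}$, not the printed $e^{2\pi k(\xi-x)}$, is what the residue computation gives, and it is also what is forced by specializing Proposition \ref{porp-jump} to $y=\eta=0$ and by the required jumps \eqref{Pr-jumps} across $\ell_{\pm\delta}$, which involve $q^k=e^{2\pi\I k(\vph-\hw)}$; so the missing $\I$ is a typo in the paper's statement rather than an error in your proof.
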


\subsubsection*{Identities}

By differentiating under the sign of the integral and integrating by parts,
it is straightforward to prove
\begin{proposition}
\label{prop-ident}
\bea
\p_\eta\cJ_{m,n}&=&-2\pi\I \cJ_{m-1,n},
\nn\\
\p_\xi\cJ_{m,n}&=&\frac{2\pi}{x}\, \cJ_{m-1,n+1}+\frac{2\pi\I y}{x}\, \cJ_{m-1,n},
\nn\\
\p_y\cJ_{m,n}&=&-\I n \cJ_{m,n-1}+2\pi\I \cJ_{m-1,n}
+\delta_{n,0}(-1)^{m+n}\,\frac{\Li_m\(e^{2\pi\I (\eta-\xi y/x) }\)}{\tan(\pi y/x)}\, ,
\nn\\
\p_x\cJ_{m,n}&=&
-\frac{2\pi}{x}\, \cJ_{m-1,n+1}-\frac{2\pi\I y}{x}\, \cJ_{m-1,n}
+\frac{n+1}{x}\, \cJ_{m,n}+\frac{\I n y}{x}\, \cJ_{m,n-1}
\nn\\
&&
+\delta_{n,0}(-1)^{m+n+1}\,\frac{y}{x}\,\frac{\Li_m\(e^{2\pi\I (\eta-\xi y/x) }\)}{\tan(\pi y/x)}\, ,
\nn
\eea
and
\bea
\p_x\tcJ_{m,n}&=&\frac{n+1}{x}\,\tcJ_{m,n}-\frac{2\pi}{x}\, \tcJ_{m-1,n+1},
\nn\\
(\p_x+\p_\xi)\tcJ_{m,n}&=&\frac{n+1}{x}\, \tcJ_{m,n}
+(-1)^{m+n+1}\,\frac{n!\,\zeta(m+n)}{\pi (2\pi)^n}\,.
\nn
\eea
\end{proposition}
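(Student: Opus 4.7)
The plan is to verify each identity by direct differentiation under the integral sign, combined with integration by parts in $s$, using three elementary facts: (a) $\partial_z\Li_m(e^z)=\Li_{m-1}(e^z)$, so differentiating the $\Li_m$ factor lowers $m$ by one and produces the derivative of the exponent $-2\pi(1-\xi/x)s+2\pi\I(\eta-\xi y/x)$ as a prefactor; (b) the $y$- and $x$-derivatives of the $\coth$ factor can be converted into $s$-derivatives via $\partial_y\coth(\pi(s-\I y)/x)=-\I\,\partial_s\coth(\pi(s-\I y)/x)$ and $\partial_x\coth(\pi(s-\I y)/x)=-\tfrac{s-\I y}{x}\,\partial_s\coth(\pi(s-\I y)/x)$; (c) the boundary term at $s=\infty$ in any integration by parts vanishes thanks to the exponential decay of $\Li_m(\cdots)$ (using $\Re(1-\xi/x)>0$), while the boundary at $s=0$ survives only when $n=0$. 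Justification of differentiating under the integral is routine by dominated convergence given the decay at infinity and the assumed location of the poles.

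First I would handle $\partial_\eta\cJ_{m,n}$ and $\partial_\xi\cJ_{m,n}$, which reduce to a one-line application of (a): the $\eta$-derivative pulls down a factor $2\pi\I$, while the $\xi$-derivative of the exponent is $(2\pi/x)(s-\I y)$, so that $s^n\cdot(s-\I y)=s^{n+1}-\I y\,s^n$ produces precisely the two terms $\tfrac{2\pi}{x}\cJ_{m-1,n+1}$ and $\tfrac{2\pi\I y}{x}\cJ_{m-1,n}$ once the overall sign $(-1)^{n+m+1}$ is tracked. For $\partial_y\cJ_{m,n}$ I would add two contributions: the direct $\Li_m$-derivative gives $\tfrac{2\pi\I\xi}{x}\cJ_{m-1,n}$, and the $\coth$-derivative, via (b), becomes $-\I\partial_s\coth$, on which I integrate by parts. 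The boundary at $s=0$ with $\coth(-\pi\I y/x)=\I\cot(\pi y/x)$ yields the stated $\delta_{n,0}(-1)^{n+m}\Li_m/\tan(\pi y/x)$ term; the interior produces $-\I n\,\cJ_{m,n-1}$ (from $\partial_s s^n$) and $2\pi\I(1-\xi/x)\cJ_{m-1,n}$ (from $\partial_s\Li_m$), which combine with the direct contribution to give the advertised $2\pi\I\cJ_{m-1,n}$ with the $\xi/x$ piece cancelling.

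The $\partial_x\cJ_{m,n}$ identity follows the same recipe, but with the extra $(s-\I y)/x$ factor from (b), so integration by parts acts on $s^n(s-\I y)\Li_m$, producing four interior terms. Adding the direct $\Li_m$-contribution (whose derivative of the exponent is $-2\pi\xi(s-\I y)/x^2$), all $\xi$-dependent terms telescope: the coefficients of $\cJ_{m-1,n+1}$ collapse to $-2\pi/x$ and those of $\cJ_{m-1,n}$ to $-2\pi\I y/x$, while the $(n+1)/x\cJ_{m,n}$ and $\I y n/x\cJ_{m,n-1}$ pieces survive from $\partial_s[s^n(s-\I y)]$; the $s=0$ boundary yields the $\delta_{n,0}$ term. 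For $\tcJ_{m,n}$ the argument is parallel with $y=\eta=0$, except that the subtracted piece $-(x/\pi s)\Li_m(e^{-2\pi s})$ regularizes $s=0$ and absorbs the would-be $\delta_{n,0}$ contributions; the constant $n!\zeta(m+n)/(\pi(2\pi)^n)$ in the second identity arises because $\partial_x+\partial_\xi$ annihilates the exponent $-2\pi(1-\xi/x)s$ of the main integrand but leaves behind $-\Li_m(e^{-2\pi s})/(\pi s)$ from the subtraction, which integrates via $\int_0^\infty s^{n-1}\Li_m(e^{-2\pi s})\de s=n!\,\zeta(m+n)/(2\pi)^n$ after one more integration by parts to promote the $s^{-1}$ to $s^n$. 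The main obstacle is purely combinatorial bookkeeping of the signs and of the cancellations in $\partial_x\cJ_{m,n}$, where three distinct sources (direct $\Li_m$-derivative, interior IBP, boundary IBP) must conspire so that every $\xi/x$ contribution disappears; nothing deeper is needed.
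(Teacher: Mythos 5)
Your method --- differentiation under the integral sign followed by integration by parts in $s$ --- is exactly the paper's (whose ``proof'' is the one-line remark preceding the Proposition), and your bookkeeping for the four $\cJ_{m,n}$ identities is correct: the signs work out, the $\xi/x$ contributions in $\p_x\cJ_{m,n}$ cancel between the direct $\Li_m$-derivative and the integrated-by-parts $\coth$-derivative as you claim, and the $s=0$ boundary terms with $\coth(-\pi\I y/x)=\I\cot(\pi y/x)$ reproduce the $\delta_{n,0}$ terms with the stated coefficients. The first $\tcJ$ identity also goes through along your lines, with the non-vanishing $n=0$ boundary value $s^{n+1}\coth(\pi s/x)\to x/\pi$ at $s=0$ playing the role of the ``absorbed'' $\delta_{n,0}$ contribution.

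Your justification of the final identity, however, fails as stated. The operator $\p_x+\p_\xi$ does \emph{not} annihilate the exponent $-2\pi(1-\xi/x)s$, since $(\p_x+\p_\xi)(\xi/x)=(x-\xi)/x^2\neq 0$; it annihilates $x-\xi$, i.e.\ the exponent only after rescaling $s=x\sigma$, and that rescaling also produces the $x^{n+1}$ prefactor responsible for the $\frac{n+1}{x}\,\tcJ_{m,n}$ term, which is absent from your sketch. In addition, $\int_0^\infty s^{n-1}\Li_m(e^{-2\pi s})\,\de s=(n-1)!\,\zeta(m+n)/(2\pi)^n$, not $n!\,\zeta(m+n)/(2\pi)^n$; taking your mechanism literally one would get the constant $(-1)^{m+n}\frac{(n-1)!\,\zeta(m+n)}{\pi(2\pi)^n}$ (wrong sign, off by a factor $n$, ill-defined at $n=0$) and miss the $\frac{n+1}{x}\,\tcJ_{m,n}$ term altogether. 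The repair stays within your method: either rescale $s=x\sigma$ before differentiating, or derive the last identity by adding to your (correct) first $\tcJ$ identity the relation $\p_\xi\tcJ_{m,n}=\frac{2\pi}{x}\,\tcJ_{m-1,n+1}+(-1)^{m+n+1}\frac{n!\,\zeta(m+n)}{\pi(2\pi)^n}$, whose constant arises because $\p_\xi$ yields $\frac{2\pi}{x}$ times $\int_0^\infty s^{n+1}\coth(\pi s/x)\Li_{m-1}\,\de s$, which differs from $\tcJ_{m-1,n+1}$ precisely by its subtraction term, $\frac{x}{\pi}\int_0^\infty s^{n}\Li_{m-1}(e^{-2\pi s})\,\de s=\frac{x\, n!\,\zeta(m+n)}{2\pi(2\pi)^{n}}$ --- i.e.\ the constant comes from the subtraction inside $\tcJ_{m-1,n+1}$, not from the $x$-derivative of the subtraction inside $\tcJ_{m,n}$.
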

\noindent
In particular, we have the following relations
\bea
(x\p_x+y\p_y)\cJ_{m,n}&=&(n+1)\cJ_{m,n}-2\pi \cJ_{m-1,n+1},
\nn
\\
(\p_y+\p_\eta)\cJ_{m,n}&=&-\I n \cJ_{m,n-1}
+\delta_{n,0}(-1)^{m+n}\,\frac{\Li_m\(e^{2\pi\I (\eta-\xi y/x) }\)}{\tan(\pi y/x)}\, ,
\label{ident-more}\\
(\p_x+\p_\xi)\cJ_{m,n}&=&\frac{n+1}{x}\, \cJ_{m,n}+\frac{\I n y}{x}\, \cJ_{m,n-1}
+\delta_{n,0}(-1)^{m+n+1}\,\frac{y}{x}\,\frac{\Li_m\(e^{2\pi\I (\eta-\xi y/x) }\)}{\tan(\pi y/x)}\, .
\nn
\eea

\subsubsection*{Asymptotic expansion}

In the discussion of the $\tau$ function we use the asymptotic expansions of the functions $\cJ_{m,n}$
and $\tcJ_{m,n}$ at vanishing values of $\xi$ and $\eta$. They are given by the following
\begin{proposition}
\label{porp-expand}
At large $x$ and $y$ such that $x/y$ is kept fixed, one has the following asymptotic expansions
\bea
\cJ_{m,n}(x,y,0,0) +\cJ_{m,n}(x,-y,0,0)&=&
\frac{4(-1)^{n+m}}{(2\pi)^{n+1}}
\sum_{k=1}^\infty \frac{(n+2k-1)!}{ (2k-1)!}\, \frac{\zeta(m+n+2k)}{x^{2k-1}}\,\Li_{1-2k}\(e^{2\pi\I y/x}\),
\nn\\
\tcJ_{m,n}(x,0) &=& 2(-1)^{n+m+1}\sum_{k=1}^\infty \frac{(n+2k-1)!}{(2\pi)^{n+1}(2k)!}\, \zeta(m+n+2k)\,B_{2k}\, x^{1-2k}.
\nn
\eea
\end{proposition}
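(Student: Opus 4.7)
The plan is to treat the two expansions separately, since each rests on a different series representation of $\coth$ adapted to the relevant small parameter. Both assertions concern the behaviour as $x\to\infty$ with $y/x$ held fixed; the Taylor expansion of $\Li_m(e^{-2\pi s})$ is concentrated on $s=O(1)$ because of its exponential decay at infinity, so the effective small parameter inside the integral is $s/x$.

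For the symmetrized function $\cJ_{m,n}(x,y,0,0)+\cJ_{m,n}(x,-y,0,0)$, I would use the geometric-series expansion
\[
\coth\bigl(\pi(s\pm\I y)/x\bigr) = 1 + 2\sum_{\ell=1}^{\infty}
 e^{-2\pi\ell s/x}\,e^{\mp 2\pi\I\ell y/x},
\]
valid for $\Re(x)>0$ and $s>0$. Summing the two choices of sign produces a cosine, so that
\[
\coth\bigl(\pi(s-\I y)/x\bigr)+\coth\bigl(\pi(s+\I y)/x\bigr)
= 2+4\sum_{\ell=1}^{\infty} e^{-2\pi\ell s/x}\cos(2\pi\ell y/x).
\]
Inserting this into \eqref{deffunJ} and expanding $\Li_m(e^{-2\pi s})=\sum_{r\ge 1}r^{-m}e^{-2\pi r s}$, each term becomes an elementary Laplace integral that evaluates to $n!/[(2\pi)^{n+1}r^m(r+\ell/x)^{n+1}]$. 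Expanding $(r+\ell/x)^{-(n+1)}$ as a binomial series in $\ell/x$ and resumming over $r$ and $\ell$ gives a double series whose $j$-th term carries the factor
\[
\sum_{\ell=1}^{\infty}\ell^{\,j}\cos(2\pi\ell y/x)=\Re\,\Li_{-j}(e^{2\pi\I y/x}).
\]
The constant piece $(j=0)$ cancels against the contribution of the $2$ in the formula above because $\Re\Li_0(e^{2\pi\I y/x})=-\tfrac12$. The identity \eqref{Li-ident2} then forces $\Re\,\Li_{-j}(e^{2\pi\I y/x})$ to vanish for even $j\ge 2$ and to equal $\Li_{-j}(e^{2\pi\I y/x})$ itself for odd $j$, which reproduces the claimed formula upon setting $j=2k-1$ and collecting constants.

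For $\tcJ_{m,n}(x,0)$ I would proceed more directly, using the Bernoulli expansion
\[
\coth(\pi s/x)-\frac{x}{\pi s}=\sum_{k=1}^{\infty}\frac{2^{2k}B_{2k}}{(2k)!}\,\Bigl(\frac{\pi s}{x}\Bigr)^{2k-1},
\]
which is exactly the integrand of \eqref{deffunJ0} with $\xi=0$ after the $s^{-1}$ subtraction. Substituting, interchanging sum and integral, and computing $\int_0^{\infty} s^{n+2k-1}\Li_m(e^{-2\pi s})\de s=(n+2k-1)!\,\zeta(m+n+2k)/(2\pi)^{n+2k}$ produces the stated series after elementary simplification of the powers of $2$ and~$\pi$.

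The main obstacle in both parts is that the series used are asymptotic rather than convergent on all of $[0,\infty)$: the Bernoulli series converges only for $s<x$, and the binomial expansion of $(r+\ell/x)^{-(n+1)}$ diverges once $\ell\ge rx$. The formal exchange of sum and integral must therefore be justified by truncating at order~$K$, controlling the Taylor remainder of $\coth$ on $[0,\infty)$ by a function that is uniformly $O((s/x)^{2K+1})$ on compact sets and at most polynomially growing at infinity, and exploiting the exponential decay of $\Li_m(e^{-2\pi s})$ to bound the resulting error integral by $O(x^{-(2K+1)})$. This makes the identities into genuine Poincaré asymptotic expansions as $x\to\infty$ with $y/x$ fixed, which is precisely the sense needed in \S\ref{sec_top}.
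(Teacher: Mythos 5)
Your proposal is correct and takes essentially the same route as the paper: the second expansion is derived identically (Bernoulli expansion of $\coth(\pi s/x)$ minus the $x/(\pi s)$ term, then $\int_0^\infty s^{N}\Li_m(e^{-2\pi s})\,\de s = N!\,\zeta(m+N+1)/(2\pi)^{N+1}$), while for the first one the paper simply Taylor-expands $\coth(\pi(s-\I y)/x)=1+2\sum_{k\ge 0}\frac{(-1)^k}{k!}\Li_{-k}(e^{2\pi\I y/x})(2\pi s/x)^k$, integrates term by term, and then symmetrizes in $\pm y$ using \eqref{Li-ident} and \eqref{Li-ident2} — your geometric series, exact Laplace integral and binomial re-expansion reshuffle the same algebra and invoke the same inversion identities to cancel the even orders and double the odd ones. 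Your closing remarks on truncation and remainder bounds are, if anything, more careful than the paper, which performs the same formal interchange and simply calls the result an asymptotic expansion.
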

\begin{proof}
Note that in the limit we are interested in, we have
\be
\frac{1}{\tanh(\pi(s-\I y)/x)}=1+2\Li_0\( e^{-2\pi(s-\I y)/x}\)
=1+2\sum_{k=0}^\infty \frac{(-1)^k}{k!}\, \Li_{-k}\(e^{2\pi\I y/x}\)\(\frac{2\pi s}{x}\)^k.
\label{expcoth}
\ee
Therefore, after setting $\xi=\eta=0$ in \eqref{deffunJ} and using
\be
\int_0^\infty \de s\, s^n \,\Li_m\(e^{-2\pi s}\)=\frac{n!}{(2\pi)^{n+1}}\, \zeta(m+n+1),
\label{intLi}
\ee
we obtain the following asymptotic expansion
\be
\begin{split}
\cJ_{m,n}(x,y,0,0)=&\,
(-1)^{n+m+1}\[\frac{n!}{(2\pi)^{n+1}}\, \zeta(m+n+1)
\right.
\\
&\, \left.
+2\sum_{k=0}^\infty \frac{(-1)^k(n+k)!}{(2\pi)^{n+1} k!}\, \zeta(m+n+k+1)\,\Li_{-k}\(e^{2\pi\I y/x}\)x^{-k}\].
\end{split}
\label{exp-cJone}
\ee
To combine two such expansions evaluated at $\pm y$, we use \eqref{Li-ident} and \eqref{Li-ident2}.
As a result, the first term in \eqref{exp-cJone} is cancelled by the $k=0$ contribution, the terms with $k$ even vanish,
and terms with $k$ odd are doubled.
This gives exactly the first statement of the proposition.

To get the second statement, we follow the same steps except that, instead of \eqref{expcoth}, one starts with the expansion
\be
\frac{1}{\tanh(\pi s/x)}
=\frac{x}{\pi s}+2\sum_{k=1}^\infty \frac{B_{2k}}{(2k)!}\,\(\frac{2\pi s}{x}\)^{2k-1}.
\label{expcoth0}
\ee
Its first term cancels the last term in \eqref{deffunJ0} and the remaining series can be easily integrated using \eqref{intLi}
which gives the desired statement.
\end{proof}

\medskip

Finally, we also introduce functions appearing in the expression for the $\tau$ function
\be
\cI_n(x,\xi,\eta)=\int_0^{\pi\I x}\de y\(\frac{\Li_n\(e^{-2\xi y+2\pi\I \eta}\)}{\tanh(y)}-\frac{\Li_n\(e^{2\pi\I \eta}\)}{y}\)
+\Li_n\(e^{2\pi\I \eta}\)\log x.
\label{def-cIn}
\ee
Note their special value at $\xi=0$:
\be
\label{In0}
\begin{split}
\cI_n(x,0,\eta)=&\,\Bigl(\log\(1-e^{2\pi\I x}\) -\pi\I x-\log(-2\pi\I)\Bigr)\Li_n\(e^{2\pi\I \eta}\) .
\end{split}
\ee

\section{Details on summation of finite case results}
\label{ap-conifold}

In this appendix we analyze the approach towards definition of the Darboux coordinates \eqref{defcY} and their
$\tau$ function based on the attempt to perform the sum over charges in the results emerging in the finite case,
\eqref{Xuncoupled} and \eqref{taufun-res2}.
Throughout we use the shorthand notations $\hw=w/t$ and $\hvb=\vbt/t$,
and restrict to the region in the moduli space where $\Re \hw<\Re \hvb<0$ for all $\beta\in B^+$
which allows us to apply the integral representations \eqref{Lam-Binet} and \eqref{BinetUps1}.
Furthermore, we assume that $\Re\thb\in(0,1)$, while $\Re\vph=0$ so that we can
identify $[\vph]$ with $\vph$ (see \eqref{defbr}).

\subsubsection*{Darboux coordinates}

Using the finite case solution \eqref{Xuncoupled} and the BPS data given in \eqref{OmCY}, one finds that the functions \eqref{defcY}
take the following form
\be
\cY_i=\sum_{\beta\in B^+} n_\beta \beta_i \cYa,
\qquad
\cY_0=\sum_{\beta\in B^+} n_\beta \cYb-\frac{\chi(X)}{2} \cYc,
\label{cY-Br1}
\ee
where we denoted
\begin{subequations}
\bea
\cYa &=&
\sum_{k=1}^{\infty} \log \Lambda(\hvb-k\hw,1-\thb+k\vph)-\sum_{k=0}^{\infty} \log \Lambda(-\hvb-k\hw,\thb+k\vph),
\label{cYa}
\eea
\bea
\cYb &=& -\sum_{k=1}^{\infty} k \log \Lambda(\hvb-k\hw,1-\thb+k\vph)-\sum_{k=0}^{\infty} k\log \Lambda(-\hvb-k\hw,\thb+k\vph),
\label{cYb}\\
\cYb &=& -2\sum_{k=1}^{\infty} k \log \Lambda(-k\hw,k\vph).
\label{cYc}
\eea
\label{cYabc}
\end{subequations}
Under our assumptions,
the real part of the arguments of all $\Lambda$-functions are positive, therefore one can apply the integral representation \eqref{Lam-Binet}.
Substituting it into \eqref{cYabc} and exchanging the sum over $k$ with the integral, one obtains
\begin{subequations}
\bea
\cYa &=&
-\int_0^\infty \frac{\de s}{s}\[
\(\frac{e^{(\thb-\vph)s}}{\(e^{(\hw-\vph)s}-1\)\(e^{s}-1\)}
+\frac{\hf-\frac{1}{s}-\thb+\vph}{e^{\hw s}-1}-\frac{\vph\, e^{\hw s}}{\(e^{\hw s}-1\)^2}\)e^{(\hw-\hvb)s}
\right.
\nn\\
&&\left.
-\(\frac{e^{(1-\thb)s}}{\(e^{(\hw-\vph)s}-1\)\(e^{s}-1\)}
-\frac{\hf+\frac{1}{s}-\thb}{e^{\hw s}-1}-\frac{\vph\, e^{\hw s}}{\(e^{\hw s}-1\)^2}\)e^{\hvb s}\],
\label{cYa2}\\
\cYb &=&
-\int_0^\infty \frac{\de s}{s}\[\(\frac{e^{(\thb-\vph)s}}{\(e^{(\hw-\vph)s}-1\)^2\(e^{s}-1\)}
+\frac{\hf-\frac{1}{s}-\thb}{(e^{\hw s}-1)^2}-\frac{\vph\(e^{\hw s}+1\)}{\(e^{\hw s}-1\)^3}\)e^{(\hw-\hvb)s}
\right.
\nn\\
&&\left.
+\(\frac{e^{(1-\thb-\vph)s}}{\(e^{(\hw-\vph)s}-1\)^2\(e^{s}-1\)}
-\frac{\hf+\frac{1}{s}-\thb}{(e^{\hw s}-1)^2}-\frac{\vph\(e^{\hw s}+1\)}{\(e^{\hw s}-1\)^3}\)e^{(\hw+\hvb)s}\],
\label{cYb2}\\
\cYc &=&
-2\int_0^\infty \frac{\de s}{s}\(\frac{e^{(1-\vph)s}}{(e^{(\hw-\vph)s}-1)^2\(e^{s}-1\)}
-\frac{\hf+\frac{1}{s}}{(e^{\hw s}-1)^2}
-\frac{\vph\(e^{\hw s}+1\)}{(e^{\hw s}-1)^3}\) e^{\hw s}.
\label{cYc2}
\eea
\label{cYabc2}
\end{subequations}

Let us first analyze $\cYa$ for vanishing $\thb$ and $\vph$. It is easy to check that each round bracket
in the integrand behaves near $s=0$ as $1/(12 \hw)$.
While each such term gives rise to a divergent integral, the contributions of the two brackets
cancel against each other so that the whole integral is convergent.
Changing variable $s\mapsto -s$ in the second term, we can recombine the two integrals over $\IR^+$
into a single  integral over the real axis.
Although the integrand has a pole at $s=0$, the cancellation of this divergence in the original integral
implies that the integral should be defined through the principle value prescription. Thus, we have
\be
\begin{split}
\cYa=&\, \Pv\int_{-\infty}^\infty \frac{\de s}{s}\(\frac{1}{e^s-1}+\hf-\frac{1}{s}\)\frac{e^{-\hvb s}}{e^{-\hw s}-1}
\\
=&\, \int_C \frac{\de s}{s}\(\frac{1}{e^s-1}+\hf-\frac{1}{s}\)\frac{e^{-\hvb s}}{e^{-\hw s}-1}-\frac{\pi\I}{12 \hw}\, ,
\end{split}
\label{cYa-03}
\ee
where in the second line the contour $C$ goes along the real axis avoiding the origin from above.
Noticing that the first term in the round brackets gives rise to the function $\cF_{0,0}$ \eqref{deffunF},
while the other two terms can be evaluated using \eqref{evalI-C}, one finally obtains
\be
\cYa= \cF_{0,0}(-\hvb|-\hw, 1)+\frac{\hw}{2\pi\I}\, \Li_2\(e^{2\pi\I \hvb/\hw}\)-\hf \log\( 1-e^{2\pi\I \hvb/\hw}\)-\frac{\pi\I}{12 \hw} .
\label{res-cI1}
\ee
Given the invariance of $\cF_{n,m}$ under simultaneous rescaling of all its arguments, this result coincides
with the function $B(v,w,t)$ in \cite[Thm 5.2]{Bridgeland:2017vbr}.

The situation with $\cYb$ and $\cYc$ is more complicated. In contrast to $\cYa$, the integrands in \eqref{cYb2} and \eqref{cYc2}
behave near $s=0$ as $1/(6\hw^2 s^2)$. Hence both integrals are divergent.
The only exception is the case where the condition \eqref{condchi} is satisfied because it ensures
that in the combination entering $\cY_0$ \eqref{cY-Br1} the divergence is canceled.
Then one can repeat the same steps as in \eqref{cYa3} and arrive at the expression
\bea
\cY_0&=& \sum_{\beta\in B^+} n_\beta
\[\cF_{1,0}(-\hvb|-\hw,1)-\frac{\hw}{2\pi^2}\, \Li_3\(e^{2\pi\I \hvb/\hw}\)
+\frac{1-2\hvb}{4\pi\I}\, \Li_2\(e^{2\pi\I \hvb/\hw}\)
\right.
\\
&&
\left. +\frac{\hv}{2\hw} \log\( 1-e^{2\pi\I \hvb/\hw}\)\]
-\frac{\chi_X}{2}\[\cF_{1,0}(0|-\hw,1)-\frac{\hw\zeta(3)}{2 \pi^2}+\frac{\zeta(2)}{4\pi\I }\]
+\sum_{\beta\in B^+} n_\beta\,\frac{\pi\I \hvb}{12 \hw^2}\, ,
\label{res-cI2} \nn
\eea
where the two square brackets can be seen as coming from $\cYb$ and $\cYc$, respectively, while the last term results from
the residue at $s=0$. For the case of the conifold \eqref{Omconifold},
this result perfectly agrees with the function $D(v,w,t)$ in \cite[Thm 5.2]{Bridgeland:2017vbr}.

For more general CY threefolds which do not obey the condition \eqref{condchi},
to define $\cYb$ and $\cYc$, we have to {\it ignore} the divergence at $s=0$ and {\it postulate}
that they are given by the integrals along the contour $C$ shifted away from the origin. Then $\cYb$ and $\cYc$
equal to the first and second square brackets in \eqref{res-cI2}, respectively.
Note that in this way the last term gets dropped.

After switching on the variables $\thb$ and $\vph$, the same problem with divergence arises already for all three functions \eqref{cYabc2}
and even in the conifold case. For instance, the integrand in $\cYa$ near $s=0$ behaves as
$\frac{\vph(2(\thb \hw-\vph\hvb)-\hw)}{\hw^2(\hw-\vph)s^2}$.
To define all three functions, we proceed as above: we rewrite them as an integral along the real axis and declare that
the integration contour is given by $C$. This gives
\bea
\cYa &=&
\int_C\frac{\de s}{s}\[\frac{e^{-(\hvb-\thb)s}}{\(e^{-(\hw-\vph)s}-1\)\(e^{s}-1\)}
+\frac{\(\hf-\frac{1}{s}-\thb\)e^{-\hw s}}{e^{-\hw s}-1}+\frac{\vph\, e^{-(\hvb+\hw)s}}{\(e^{-\hw s}-1\)^2}\],
\label{cYa3}\\
\cYb &=&
-\int_C\frac{\de s}{s}\[\frac{e^{-(\hvb+\hw-\thb-\vph)s}}{\(e^{-(\hw-\vph)s}-1\)^2\(e^{s}-1\)}
%+\(\hf-\frac{1}{s}-\thb+\frac{\vph\(e^{-\hw s}+1\)}{e^{-\hw s}-1}\)\frac{e^{-(\hvb+\hw)s}}{\(e^{-\hw s}-1\)^2}\],
+\frac{\(\hf-\frac{1}{s}-\thb\)e^{-(\hvb+\hw)s}}{\(e^{-\hw s}-1\)^2}+\frac{\vph\, e^{-(\hvb+\hw)s}\(e^{-\hw s}+1\)}{\(e^{-\hw s}-1\)^3}\],
\label{cYb3}
\nn\\
\cYc &=&
-\int_C \frac{\de s}{s}\[\frac{e^{-(\hvb-\vph)s}}{(e^{-(\hw-\vph)s}-1)^2\(e^{s}-1\)}
%+\(\hf-\frac{1}{s}+\frac{\vph\(e^{-\hw s}+1\)}{e^{-\hw s}-1}\)\frac{e^{-\hw s}}{(e^{-\hw s}-1)^2}\]
+\frac{\(\hf-\frac{1}{s}\) e^{-\hw s}}{(e^{-\hw s}-1)^2}
+\frac{\vph\, e^{-\hw s}\(e^{-\hw s}+1\)}{(e^{-\hw s}-1)^3}\]
+\Delta\cJ_2,
\label{cYc3}
\nn
\eea
\label{cYabc3}
where
\be
\Delta\cJ_2=-\int_0^\infty \frac{\de s}{s}\[\frac{e^{-(\hw-\vph)s}}{(e^{-(\hw-\vph)s}-1)^2}-\frac{e^{-\hw s}}{(e^{-\hw s}-1)^2}\].
\ee
The contribution $\Delta\cJ_2$ is still divergent. It arises because the integrand in \eqref{cYc2}
is not antisymmetric with respect to $s\to -s$.
Proceeding as for the divergence at $s=0$, we simply drop this contribution so that one has
$\cYc=\cY^{(2)}_{\beta=0}$.
Finally, computing the integrals using \eqref{evalI-C}, taking into account the homogeneity of the functions $\cF_{n,m}$,
and substituting the resulting expressions into \eqref{cY-Br1}, one arrives at the expressions \eqref{solBridge}
given in the main text.

\subsubsection*{$\tau$ function}

The analysis of the $\tau$ function is similar, but slightly more complicated.
Its expression \eqref{taufun-res2} implies that
\be
\log\tau =  \sum_{\beta\in B^+}n_\beta\, \cT_\beta-\frac{\chi(X)}{2}\, \cT_0,
\label{tau-Bridge1}
\ee
where
\begin{subequations}
\bea
\cT_\beta &=&
\sum_{k=1}^{\infty} \log \Upsilon(\hvb-k\hw,1-\thb+k\vph)+\sum_{k=0}^{\infty} \log \Upsilon(-\hvb-k\hw,\thb+k\vph),
\label{cTa1}
\\
\cT_0 &=& 2\sum_{k=1}^{\infty} \log \Upsilon(-k\hw,k\vph).
\label{cTb1}
\eea
\label{cT1}
\end{subequations}
Using the integral representation \eqref{BinetUps1} and performing the sum over $k$ under the integral,
the two functions  in \eqref{cT1} can be rewritten as
\begin{subequations}
\bea
\cT_\beta &=&
\int_0^\infty\frac{\de s}{s}\[\(\frac{\frac{1}{s^2}-\hf\, B_2(\thb)}{1-e^{\hw s}}
+\frac{\(\thb(1-e^s)-1\)e^{(1-\thb)s}}{(e^s-1)^2(1-e^{(\hw-\vph)s})}
-\frac{\vph\, e^{(1-\thb-\vph+\hw)s}}{(e^s-1)(1-e^{(\hw-\vph)s})^2} \)e^{\hvb s}
\right.
\nn\\
&+&
\(\frac{\frac{1}{s^2}-\hf\, B_2(1-\thb)}{1-e^{\hw s}}
+\frac{\((1-\thb)(1-e^s)-1\)e^{(\thb-\vph)s}}{(e^s-1)^2(1-e^{(\hw-\vph)s})}
-\frac{\vph\, e^{(\thb-\vph)s}}{(e^s-1)(1-e^{(\hw-\vph)s})^2} \)e^{(\hw-\hvb) s}
\nn\\
&+&\left.
\frac{\vph\, (1-2\thb)\,e^{\hw s}}{2(1-e^{\hw s})^2}\(e^{\hvb s}-e^{-\hvb s}\)
-\frac{\vph^2\, e^{\hw s}(e^{\hw s}+1)}{2(1-e^{\hw s})^3}\(e^{\hvb s}+e^{-\hvb s}\)
\]+\cS_\beta,
\label{cTa2}
\\
\cT_0 &=&
2\int_0^\infty\frac{\de s}{s}\[
\(\frac{\frac{1}{s^2}-\frac{1}{12}}{1-e^{\hw s}}
-\frac{e^{(1-\vph)s}}{(e^s-1)^2(1-e^{(\hw-\vph)s})}
-\frac{\vph\, e^{(1-\vph)s}}{(e^s-1)(1-e^{(\hw-\vph)s})^2}\)e^{\hw s}
\right.
\nn\\
&&\left.
+\frac{\vph\, e^{\hw s}}{2(1-e^{\hw s})^2}-\frac{\vph^2\, e^{\hw s}(e^{\hw s}+1)}{2(1-e^{\hw s})^3}
\]+\cS_0,
\label{cTb2}
\eea
\label{cT2}
\end{subequations}
where
\begin{subequations}
\bea
\cS_\beta &=& -\hf\, B_2(\thb)\log(-\hvb)-
\sum_{k=1}^{\infty}\Biggl[ B_2(k\vph)\log((k\hw)^2-\hvb^2)
\Biggr.
\nn\\
&&\left.
+2k\vph\(\log(\hvb-k\hw)-\vth \log\frac{1+\frac{\hvb}{k\hw}}{1-\frac{\hvb}{k\hw}}\)+(\vth^2-\vth)\log((k\hw)^2-\hvb^2)\]
\label{cSa1}
\\
\cS_0 &=& -\sum_{k=1}^{\infty}B_2(k\vph)\log (-k\hw).
\label{cSb1}
\eea
\label{cS1}
\end{subequations}
When  the condition \eqref{condchi} is satisfied and for $\thb=\vph=0$, one has
\be
\begin{split}
\sum_{\beta\in B^+} n_\beta\cS_\beta- \frac{\chi_X}{2}\,\cS_0=&\,
-\frac{1}{12}\sum_{\beta\in B^+} n_\beta\[\log(-\hvb)+\sum_{k=1}^\infty  \log\(1-\frac{\hvb^2}{(k\hw)^2}\)\]
\\
=&\,
-\frac{1}{12}\sum_{\beta\in B^+} n_\beta\[\log\frac{\hw}{2\pi\I }- \frac{\pi\I\vbt}{w}+\log\(1-e^{2\pi\I \vbt/w}\)\].
\end{split}
\label{res-cS-con}
\ee
Furthermore, the integral resulting from the combination $\sum_{\beta\in B^+} n_\beta\cT_\beta-\frac{\chi_X}{2}\,\cT_0$
is convergent and can be treated in the same way as for Darboux coordinates.
Namely, one can rewrite it as an integral over the real axis using the principle
value prescription and evaluate it in terms of the functions $\cF_{n,m}$, using \eqref{evalI-C} and
the identity
\be
\int_0^\infty\frac{\de s}{s}\[\frac{1}{s^2} -\frac{e^{s}}{(e^{s}-1)^2}
+\frac{1}{6\hw s}+\frac{1}{6\(e^{-\hw s}-1\)}\]=\frac{1}{12}\, \log\frac{- w}{2\pi}  -\zeta'(-1).
\ee
The result is
\be
\begin{split}
\log\tau=&\, \sum_{\beta\in B^+} n_\beta\[\cF_{1,0}(-\hvb|-\hw,1)+\frac{w^2}{(2\pi\I t)^2}\,\Li_3\(e^{2\pi\I \vbt/w}\)
+\frac{\pi\I}{12}\, \frac{\vbt}{w}  \]
\\
&\, -\frac{\chi_X}{2}\[\cF_{1,0}(0|-\hw,1)+\frac{w^2}{(2\pi\I t)^2}\,\zeta(3)+\zeta'(-1)\],
\end{split}
\ee
which coincides with \cite[Thm 1.2]{Bridgeland:2017vbr}  for the conifold case, up to an irrelevant constant term.

In more general cases, we again encounter divergences which now arise both in the integrals at $s=0$ and in the series \eqref{cS1}.
Ignoring them, one gets \eqref{taufun-Bridge}.

%\bibliography{combined}
%\bibliographystyle{utphys}
%\end{document}

\providecommand{\href}[2]{#2}\begingroup\raggedright\endgroup

\end{document}